\newtheorem{thma}{Theorem}[section]
\newtheorem{defi}{Definition}[section]
\begin{document}

\title{\textbf{Non-perturbative treatment of the linear covariant gauges by taking into account the Gribov copies}}
\author{\textbf{M.~A.~L.~Capri$^1$}\thanks{caprimarcio@gmail.com}\ , \textbf{A.~D.~Pereira$^2$}\thanks{aduarte@if.uff.br}\ , \textbf{R.~F.~Sobreiro$^2$}\thanks{sobreiro@if.uff.br}\ , \textbf{S.~P.~Sorella$^1$}\thanks{silvio.sorella@gmail.com}\\\\
\textit{{\small $^1$UERJ $-$ Universidade do Estado do Rio de Janeiro,}}\\
\textit{{\small Departamento de F\'isica Te\'orica, Rua S\~ao Francisco Xavier 524,}}\\
\textit{{\small 20550-013, Maracan\~a, Rio de Janeiro, Brasil}}\and
\textit{{\small $^2$UFF $-$ Universidade Federal Fluminense,}}\\
\textit{{\small Instituto de F\'{\i}sica, Campus da Praia Vermelha,}}\\
\textit{{\small Avenida General Milton Tavares de Souza s/n, 24210-346,}}\\
\textit{{\small Niter\'oi, RJ, Brasil.}}}
\date{}
\maketitle

\begin{abstract}
In this paper, a proposal for the restriction of the Euclidean functional integral to a region free of infinitesimal Gribov copies in linear covariant gauges is discussed. An effective action, akin to the Gribov-Zwanziger action of the Landau gauge, is obtained which implements the aforementioned restriction.  Although originally non-local, this action can be cast in local form  by  introducing auxiliary fields. As in the case of the Landau gauge, dimension two condensates are generated at the quantum level, giving rise to a refinement of the action which is employed to obtain the tree-level gluon propagator in linear covariant gauges. A comparison of our results with those available from numerical lattice simulations  is also provided.
\end{abstract}

\section{Introduction} \label{Intro}

Analytical approaches to  Quantum Chromodynamics (QCD) face many  problems due to the fact that at low energy scales, where many important physical phenomena as confinement  and chiral symmetry breaking take place, perturbation theory breaks down.  Moreover, a  complete non-perturbative framework is not yet at our disposal. On the other hand, lattice numerical methods have provided firm evidence for those non-perturbative effects. Let us underline that the interplay between numerical and analytical analysis has been proven to be very fruitful over the last decade.  
 
\noindent An analytical approach which is receiving growing attention is the  Gribov-Zwanziger set up. In this framework, a suitable modification of  the Faddeev-Popov path integral quantization procedure  is worked out and non-trivial non-perturbative results emerge. As first noticed by Gribov \cite{Gribov:1977wm}, after gauge fixing, one still has gauge field configurations which satisfy the gauge condition and are connected to each other through a gauge transformation: the so-called Gribov copies.  This implies that a residual redundancy is still present in the Faddeev-Popov quantization formula.   Although this fact was originally pointed out  in  Landau and Coulomb gauges \cite{Gribov:1977wm}, it was proven by Singer \cite{Singer:1978dk} that this is not a peculiarity of a particular gauge choice, being  a general feature of the gauge fixing procedure. 

\noindent In \cite{Gribov:1977wm}\footnote{See  \cite{Sobreiro:2005ec,Vandersickel:2012tz} for a pedagogical introduction to the Gribov problem.} a partial solution to this problem was proposed in the Landau gauge, $\partial_\mu A^a_\mu=0$, for which the corresponding Faddeev-Popov  operator $\mathcal{M}^{ab}$,

\begin{equation}
\mathcal{M}^{ab}=-\delta^{ab}\partial^2+gf^{abc}A^{c}_{\mu}\partial_{\mu},\,\,\,\, \mathrm{with}\,\,\,\, \partial_{\mu}A^{a}_{\mu}=0\,,      
\label{intro0}
\end{equation}

\noindent is hermitian \cite{Sobreiro:2005ec,Vandersickel:2012tz}, so that its eigenvalues are real. The original proposal by Gribov \cite{Gribov:1977wm} in order to take into account the existence of gauge copies was that of restricting the domain of integration in the functional Euclidean integral to a certain region $\Omega$ in field space defined as the set of all field configurations fulfilling the Landau gauge and for which the Faddeev-Popov operator is strictly positive, namely 
\begin{equation}
\Omega = \{ \; A^a_\mu,  \;  \partial_\mu A^a_\mu =0,  \;    {\mathcal M}^{ab} > 0 \; \}    \;. \label{om}
\end{equation}
The region $\Omega$ is called the Gribov region.   It enjoys several properties \cite{Dell'Antonio:1991xt}: {\it 1)}  it is a bounded region in field space, {\it 2)} it is convex, {\it 3)}  all gauge orbits cross $\Omega$ at least once. In particular, this last property ensures that a gauge field configuration located outside of the Gribov region  is a copy of a configuration located inside $\Omega$. This important result gives a  well defined support\footnote{Nowadays, it is known that the Gribov region $\Omega$ is not completely free from Gribov copies, see for example \cite{Vandersickel:2012tz} and references therein.  There are still additional copies inside $\Omega$. A smaller region, known as the fundamental modular region (FMR) has been identified which is completely free from Gribov copies. Though, our present knowledge of the FMR is not as good as our knowledge of the Gribov region $\Omega$, which will be the main subject of the present investigation.} to Gribov's original proposal of restricting the functional integral to the region $\Omega$. 

\noindent Therefore, according to Gribov's proposal \cite{Gribov:1977wm}, for the partition function of quantized Yang-Mills action in the Landau gauge we write 
\begin{equation}
\EuScript{Z}=\int_{\Omega} \left[\EuScript{D}\mathbf{\Phi}\right]\mathrm{e}^{-\left( S_{\mathrm{YM}}+S_{\mathrm{gf}}\right) }\,,
\label{intro1}
\end{equation}

\noindent with $\left( S_{\mathrm{YM}}+S_{\mathrm{gf}} \right) $ given by eq.(\ref{A2}) with $\alpha=0$. Later on, Zwanziger  \cite{Zwanziger:1989mf} was able to show that the restriction of the domain of integration to the region $\Omega$ can be effectively implemented by adding to the starting action an additional non-local term $H_{L}(A)$, known as the horizon function. More precisely 
\begin{equation}
\int_{\Omega} \left[\EuScript{D}\mathbf{\Phi}\right]\mathrm{e}^{-\left( S_{\mathrm{YM}}+S_{\mathrm{gf}}\right) } = \int \left[\EuScript{D}\mathbf{\Phi}\right]\mathrm{e}^{-    {S}^{L}_{\mathrm{GZ}}  }   \;, 
\label{gz1}
\end{equation}
where 
\begin{equation}
{S}^{L}_{\mathrm{GZ}}=   S_{\mathrm{YM}}+S_{\mathrm{gf}}    +\gamma^4H_{L}(A) - 4V\gamma^4(N^2-1)\,, 
\label{intro2}
\end{equation}
and 
\begin{equation}
H_{L}(A)=g^2\int d^4x~f^{abc}A^{b}_{\mu}\left[\mathcal{M}^{-1}\right]^{ad}f^{dec}A^{e}_{\mu}\,, 
\label{intro3}
\end{equation}
with $\left[\mathcal{M}^{-1}\right]$ denoting the inverse of the Faddeev-Popov operator, eq.\eqref{intro0}. 
The mass parameter $\gamma$ appearing in expression \eqref{intro2}  is known as the Gribov parameter. It  is not a free parameter, being  determined in a self-consistent way by the  gap equation \cite{Zwanziger:1989mf}
\begin{equation}
\langle H_{L} \rangle = 4V(N^2-1)\,.
\label{intro4}
\end{equation}
where the vacuum expectation values $\langle H_{L} \rangle$ has to be evaluated with the measure defined in eq.\eqref{gz1}. 
\noindent  

\noindent The action  in eq.(\ref{intro2}) is called the Gribov-Zwanziger (GZ) action. It is a non-local action due to the presence of the horizon function. However, it can  be cast in local form \cite{Vandersickel:2012tz} by the introduction of a pair of commuting fields $(\bar{\varphi}^{ab}_{\mu},\varphi^{ab}_{\mu})$ and a pair of anti-commuting ones $(\bar{\omega}^{ab}_{\mu},\omega^{ab}_{\mu})$, namely 

\begin{eqnarray}
S^{L}_{\mathrm{GZ}} &=& S_{\mathrm{YM} } +S_{\mathrm{gf}}-\int d^4x\left(\bar{\varphi}^{ac}_{\mu}\mathcal{M}^{ab}\varphi^{bc}_{\mu}-\bar{\omega}^{ac}_{\mu}\mathcal{M}^{ab}\omega^{bc}_{\mu} + gf^{adb}(\partial_{\nu}\bar{\omega}^{ac}_{\mu})(D^{de}_{\nu}c^{e})\varphi^{bc}_{\mu}\right) \nonumber \\
&+& \gamma^2\int d^4x~gf^{abc}A^{a}_{\mu}(\varphi+\bar{\varphi})^{bc}_{\mu}-4\gamma^4V(N^2-1)\,.
\label{intro5}
\end{eqnarray}
Since the auxiliary fields $(\bar{\varphi}^{ab}_{\mu},\varphi^{ab}_{\mu})$ and $(\bar{\omega}^{ab}_{\mu},\omega^{ab}_{\mu})$ appear quadratically in expression \eqref{intro5}, they can be easily integrated out, giving back the non-local expression \eqref{intro2}. Remarkably, the local action (\ref{intro5}) turns out to be renormalizable to all orders \cite{Vandersickel:2012tz,Zwanziger:1989mf}, while implementing the restriction to the Gribov region $\Omega$ in the Landau gauge. 

\noindent The Gribov-Zwanziger action in the Landau gauge is plagued by dynamical non-perturbative instabilities and the generation of dimension two condensates is favoured \cite{Dudal:2007cw,Dudal:2008sp,Dudal:2011gd}. These condensates can be taken into account by further modifying the original  action, giving rise to the so-called refined Gribov-Zwanziger (RGZ) action  \cite{Dudal:2007cw,Dudal:2008sp,Dudal:2011gd}, given by 

\begin{equation}
S^{L}_{\mathrm{RGZ}} = S^{L}_{\mathrm{GZ}} + \frac{m^2}{2}\int d^4x~A^{a}_{\mu}A^{a}_{\mu} - M^2\int d^4x\left(\bar{\varphi}^{ab}_{\mu}\varphi^{ab}_{\mu}-\bar{\omega}^{ab}_{\mu}\omega^{ab}_{\mu}\right)\,.
\label{intro6}
\end{equation}
As much as the Gribov parameter $\gamma^2$, the parameters $(m,M)$ are not free, being generated dynamically in a self-consistent way, see  \cite{Dudal:2008sp,Dudal:2011gd}. 

\noindent The RGZ action describes gluon and ghost propagators in very good agreement with the most recent lattice results \cite{Cucchieri:2007rg,Cucchieri:2008fc,Cucchieri:2011ig}. In particular, for the the tree level gluon propagator one obtains 

\begin{equation}
\langle A^{a}_{\mu}(k)A^{b}_{\nu}(-k)\rangle_{L} = \delta^{ab}\left(\delta_{\mu\nu}-\frac{k_{\mu}k_{\nu}}{k^2}\right)\frac{k^2+M^2}{(k^2+m^2)(k^2+M^2)+2Ng^2\gamma^4}\,.
\label{intro7}
\end{equation}

\noindent Also, the RGZ framework gives results for glueball masses which compare well with numerical results, see \cite{Dudal:2010cd,Dudal:2013wja} and references therein.

\noindent It is worth underlining here that the construction of the  GZ and RGZ actions in the Landau gauge follows  directly from  the excellent understanding of the properties of the Gribov region $\Omega$ of the Landau gauge which has been achieved so far, as mentioned before. Willing to address the Gribov issue in other gauges, the first and necessary step towards the construction of a suitable non-perturbative action is the study of the corresponding Gribov region, whose properties are encoded in  the structure of the Faddeev-Popov operator associated to the chosen gauge. This is, for example, the way in which the Gribov issue has been addressed in the maximal Abelian gauge  \cite{Capri:2005tj,Dudal:2006ib,Capri:2006cz,Capri:2008vk,Capri:2010an}, where it has been possible to characterize a few properties of the corresponding Gribov region, paving the construction of the analogue of the RGZ action in this gauge.  Let us mention  that, besides the Landau and maximal Abelian gauges, a successful characterization of the Gribov region and of the ensuing GZ and RGZ actions has been worked out also in the non-covariant Coulomb gauge  \cite{Zwanziger:2002sh,Zwanziger:2003de,Greensite:2004ke,Greensite:2004ke,Zwanziger:2004np,Guimaraes:2015bra}. All these three gauges share the very important feature that the corresponding Faddeev-Popov operators are hermitian. Thus, they have real eingevalues, a property which makes the definition of the corresponding Gribov regions very clear and transparent. In addition, all these gauges can be defined through the minimization of suitable  auxiliary functionals. For example, the Landau gauge, $\partial_\mu A^a_\mu=0$, can be obtained as the stationary condition with respect to the gauge transformations of the auxiliary functional $\int d^4x A^a_\mu A^a_\mu$. Moreover, the second variation of $\int d^4x A^a_\mu A^a_\mu$ with respect to the gauge transformations yields precisely  the Faddeev-Popov operator of the Landau gauge, see \cite{Vandersickel:2012tz}. As a consequence, the Gribov region $\Omega$ of the Landau gauge, eq.\eqref{om}, can be defined as the set of all relative minima of the auxiliary functional  $\int d^4x A^a_\mu A^a_\mu$ in field space. A similar feature holds in both maximal Abelian and Coulomb gauges. This important property has deep mathematical consequences and allows for a lattice formulation of these gauges,  whose predictions can be thus tested from the numerical side.  

\noindent In general, addressing the issue of the Gribov copies in an arbitrary gauge is a  very difficult task, as many properties of the Landau, Coulomb and maximal Abelian gauges are lost. This is, for instance, the case of  the  linear covariant gauges which are the object of the present investigation. Here, unlike the Landau, Coulomb and maximal Abelian gauges, the Faddeev-Popov operator lacks hermiticity and, moreover, no auxiliary functional is at our disposal, making the treatment of the Gribov copies highly non-trivial. Despite these difficulties, a first attempt to address the Gribov problem in the linear covariant gauges was discussed in \cite{Sobreiro:2005vn}, under the assumptions that the gauge parameter $\alpha$ present in these  gauges was considered to be infinitesimal, {\it i.e.} $\alpha \ll 1$. This hypothesis allowed the authors \cite{Sobreiro:2005vn}  to introduce  a region in field space which is free from  infinitesimal Gribov copies.  More precisely, according to \cite{Sobreiro:2005vn}, this region  is defined by requiring that the transverse component of the gauge field\footnote{We remind  that the class of the linear covariant gauges is defined by the condition: $\partial_\mu A^a_\mu -\alpha b^a=0$, where $\alpha$ is a gauge parameter. Unlike the Landau gauge, the gauge field $A^a_\mu$ has now both transverse and longitudinal components, {\it i.e.} $A^a_\mu = A^{aT}_\mu + A^{aL}_\mu$. } $A^{aT}_{\mu}=(\delta_{\mu\nu}-\partial_{\mu}\partial_{\nu}/\partial^2)A^{a}_{\nu}$ belongs to the Gribov region  $\Omega$ of the Landau gauge, eq.\eqref{om}, and that the longitudinal component  $A^{aL}_{\mu}=(\partial_{\mu}\partial_{\nu}/\partial^2)A^a_{\nu}$ remains free. 

\noindent In this work we pursue the analysis of the linear covariant gauges  started in \cite{Sobreiro:2005vn}. In particular, we  identify a suitable region  ${\Omega}_{\mathrm{LCG}}$ in field space which turns out to be free from infinitesimal copies, without relying on the assumption $\alpha \ll 1$ made in  \cite{Sobreiro:2005vn}. More precisely, as we shall see in the next section, we will be able to eliminate all infinitesimal copies related to zero modes of the Faddeev-Popov operator which are smooth functions of the gauge parameter $\alpha$. Also, the restriction of the  domain of integration in the Euclidean functional integral  is effectively implemented through the introduction of the correspondent non-local horizon function which can be localized by means of a set of auxiliary fields. Though, the  localization procedure in the linear covariant gauges displays  new features with respect to the Landau gauge. Therefore, a local Gribov-Zwanziger action which eliminates infinitesimal Gribov copies in linear covariant gauges is established. The gap equation which determines the Gribov parameter $\gamma$ is analyzed at one-loop order, with the important physical outcome that no gauge parameter dependence emerges for $\gamma$.  As in the case of the Landau, maximal Abelian and Coulomb gauges, this action is plagued by further non-perturbative dynamical effects related to dimension two condensates, resulting in a refinement of the starting Gribov-Zwanziger action. The tree-level gluon propagator stemming from the refined Gribov-Zwanziger action in linear covariant gauges is evaluated and compared with the  most recent numerical lattice lattice data.  

\noindent The work is organized as follows. In Sect.\ref{gribovreg} we elaborate on the definition of the region ${\Omega}_{\mathrm{LCG}}$ which allows to eliminate the infinitesimal copies in the linear covariant gauges. In Sect.\ref{horizon}  we implement the restriction of the domain of integration in the functional integral to the region ${\Omega}_{\mathrm{LCG}}$ by introducing a suitable non-local horizon function.  In Sect.\ref{localization} we show how to localize the action defined in the previous section by introducing a set of  auxiliary fields. The resulting action is identified with the local Gribov-Zwanziger action in linear covariant gauges.  In Sect.\ref{brstbreaking}, the BRST soft breaking of the Gribov-Zwanziger action is briefly analyzed.  A consistency check is realized in Sect.\ref{oneloopgap}, where we show that the gap equation which determines the Gribov parameter $\gamma$ turns out to be independent from the gauge parameter at one-loop order. Further, the origin of dimension two condensates by dynamical effects is analyzed in Sect.\ref{condensatessect} as well as the refinement of the Gribov-Zwanziger action. In Sect.\ref{propagator}, we  present the tree-level gluon propagator and compare it with the available  results from numerical lattice simulations. Finally, we present our conclusions and perspectives.

\section{Proposal for a  Gribov region in linear covariant gauges} \label{gribovreg}

As discussed in Sect.\ref{Intro}, a partial resolution of the  Gribov problem in the Landau gauge requires the restriction of the domain of integration in the path integral to the so-called Gribov region $\Omega$, eq.\eqref{om}.  In the Landau gauge, since the Faddeev-Popov operator is hermitian, the definition of $\Omega$ as the region in field space  where the Faddeev-Popov operator is positive has a clear and well defined geometrical meaning, as already mentioned before.   This is not the case of linear covariant gauges, where the gauge condition reads

\begin{equation}
\partial_{\mu}A^{a}_{\mu}-\alpha b^{a}=0\,,
\label{lcg1}
\end{equation}

\noindent where $\alpha$ is an arbitrary gauge parameter and $b^a$ a Lagrange multiplier field. This gauge condition coincides with Landau condition when $\alpha=0$. However, for values of $\alpha$ different from zero, the gauge field acquires a longitudinal component, which is absent in the Landau gauge. If we perform an infinitesimal gauge transformation on eq.(\ref{lcg1}) and impose the gauge condition (\ref{lcg1}) on the gauge transformed field, we obtain

\begin{eqnarray}
\EuScript{M}^{ab}\xi^b & \equiv & -\partial_{\mu}D^{ab}_{\mu}\xi^b=0\,, \label{lcg2}  \\
\EuScript{M}^{ab} &=& -\partial_{\mu}D^{ab}_{\mu} = - \partial_\mu ( \delta^{ab}\partial_\mu  + g f^{acb} A^c_\mu )  = -\delta^{ab} \partial^2 -\alpha g f^{acb}b^c - gf^{acb}A^c_\mu \partial_\mu \;, 
\label{new1}
\end{eqnarray}
with $\xi^a$ being the infinitesimal parameter of the gauge transformation. The operator $\EuScript{M}^{ab}$ in eq.\eqref{new1} 
is the Faddeev-Popov operator of the linear covariant gauges and eq.\eqref{lcg2} is the corresponding zero mode equation giving rise to infinitesimal Gribov copies, namely 
\begin{equation}
{\hat A}^{a}_\mu = A^a_\mu + D^{ab}_\mu \xi^b \;, \qquad \partial_\mu {\hat A}^{a}_\mu = \partial_\mu A^a_\mu =  \alpha b^a \;.  \label{new2}   \; 
\end{equation}
Unlike the Faddeev-Popov operator of the  Landau gauge, the operator  $\EuScript{M}^{ab}$ in eq.\eqref{new1} is not hermitian. This fact makes the characterization of the analogue of the Gribov region in the linear covariant gauges highly non-trivial. We  highlight that the elimination of the Gribov copies using the approach followed by Gribov \cite{Gribov:1977wm} and by Zwanziger \cite{Vandersickel:2012tz,Zwanziger:1989mf} requires the use of a hermitian operator.  The understanding of the properties of the so-called Gribov region is crucial in order to deal with the problem of the gauge copies. As already mentioned, for a generic choice of the gauge condition, the properties of the corresponding Gribov region are encoded in the associated Faddeev-Popov operator and no systematic way of defining it is known.  

\noindent Nevertheless, an attempt to define a region free from infinitesimal copies in linear covariant gauges was done in \cite{Sobreiro:2005vn}. In this work, a region free from infinitesimal copies was identified, under the assumption that  the gauge parameter $\alpha$ was infinitesimal, {\it i.e.} $\alpha \ll 1$. The region introduced in \cite{Sobreiro:2005vn} is defined by demanding that  the transverse component of the gauge field, $A^{aT}_\mu= \left(\delta_{\mu\nu}-\frac{\partial_{\mu}\partial_{\nu}}{\partial^2}\right)A^{a}_{\nu}$,  belongs to the Gribov region $\Omega$ of Landau gauge, eq.\eqref{om}. Here, we extend this result for a finite value of $\alpha$. The extension relies on  the following theorem. 

\begin{thma}
If the transverse component of the gauge field $A^{Ta}_{\mu}=\left(\delta_{\mu\nu}-\frac{\partial_{\mu}\partial_{\nu}}{\partial^2}\right)A^{a}_{\nu}$ $\in$ $\Omega$, then the equation $\EuScript{M}^{ab}\xi^b=0$ has only the trivial solution $\xi^b=0$.
\label{thma}
\end{thma}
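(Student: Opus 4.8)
The plan is to reduce the linear-covariant-gauge zero-mode equation $\EuScript{M}^{ab}\xi^b=0$ to the Landau-gauge one. First I would insert the decomposition $A^a_\mu=A^{Ta}_\mu+A^{La}_\mu$ into the Faddeev--Popov operator \eqref{new1} and use the gauge condition \eqref{lcg1} to exhibit its $\alpha$-dependence explicitly. Since $\partial_\mu A^{Ta}_\mu=0$, the transverse part reproduces precisely the hermitian combination $-\delta^{ab}\partial^2-gf^{acb}A^{Tc}_\mu\partial_\mu$, i.e.\ the Landau Faddeev--Popov operator $\mathcal{M}^{ab}$ of eq.\eqref{intro0} evaluated on $A^{Ta}_\mu$, which is strictly positive by the hypothesis $A^{Ta}_\mu\in\Omega$. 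The longitudinal part $A^{La}_\mu=\partial_\mu\psi^a$ is fixed by the gauge condition to satisfy $\partial^2\psi^a=\partial_\mu A^a_\mu=\alpha b^a$, so it is of order $\alpha$ and vanishes in the Landau limit; the term $-\alpha g f^{acb}b^c$ of \eqref{new1} is likewise explicitly of order $\alpha$. Viewing the configuration inside the family obtained by holding $A^{Ta}_\mu$ and $b^a$ fixed and varying $\alpha$, one therefore has $\EuScript{M}^{ab}=\mathcal{M}^{ab}[A^{T}]+\alpha\,\EuScript{N}^{ab}$, with $\EuScript{N}^{ab}$ collecting the two remaining pieces and being independent of $\alpha$.

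The rest is a recursion in the gauge parameter. I would seek a solution $\xi^b=\xi^b(\alpha)$ of $\EuScript{M}^{ab}\xi^b=0$ that is smooth in $\alpha$, expand $\xi^b=\sum_{n\geq0}\alpha^n\,\xi^b_{(n)}$, and collect powers of $\alpha$. At order $\alpha^0$ this gives $\mathcal{M}^{ab}[A^{T}]\,\xi^b_{(0)}=0$; strict positivity of $\mathcal{M}^{ab}[A^{T}]$ for $A^{Ta}_\mu\in\Omega$ means it has trivial kernel, so $\xi^b_{(0)}=0$. At order $\alpha^n$ with $n\geq1$ one gets $\mathcal{M}^{ab}[A^{T}]\,\xi^b_{(n)}=-\bigl(\EuScript{N}\xi_{(n-1)}\bigr)^{a}$, whose right-hand side vanishes by the inductive hypothesis $\xi^b_{(n-1)}=0$, and again $\xi^b_{(n)}=0$. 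By induction every coefficient vanishes, hence $\xi^b\equiv0$, which is the statement of the theorem.

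I expect the difficulty to be conceptual rather than computational, and to sit in two places. The first is the reduction itself: one must make sure that at $\alpha=0$ the configuration genuinely collapses onto $A^{Ta}_\mu\in\Omega$, which is exactly where the gauge condition \eqref{lcg1} is used, since it ties the longitudinal component to $\alpha b^a$. The second, and the real limitation of the result, is the restriction to zero modes that are \emph{smooth} in $\alpha$: this is what allows the recursion to close, but it says nothing about a hypothetical copy whose zero mode is singular as $\alpha\to0$. It is also worth noting why a single energy estimate is not enough: contracting $\EuScript{M}^{ab}\xi^b=0$ with $\xi^a$ and integrating yields, after the $b^a$-term drops by the antisymmetry of $f^{acb}$, the identity $\langle\xi,\mathcal{M}[A^{T}]\xi\rangle=g\int d^4x\,f^{acb}A^{Lc}_\mu\,\xi^a\,\partial_\mu\xi^b$, whose right-hand side has no definite sign and cannot be bounded by the left-hand side for finite $\alpha$; this obstruction is precisely what forces the order-by-order argument instead of the direct positivity argument available in the Landau gauge.
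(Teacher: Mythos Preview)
Your proposal is correct and follows essentially the same route as the paper: split $\EuScript{M}^{ab}=\EuScript{M}^{Tab}+\alpha\,\EuScript{N}^{ab}$ using the gauge condition, Taylor-expand the zero mode in $\alpha$ under a smoothness assumption, and kill each coefficient recursively via the strict positivity of $\EuScript{M}^{T}$ for $A^{T}\in\Omega$. The paper merely phrases the recursion by first inverting $\EuScript{M}^{T}$ to write $\xi^a=-g\alpha\,[(\EuScript{M}^T)^{-1}]^{ad}f^{dbc}\partial_\mu\bigl((\partial_\mu/\partial^2\,b^c)\xi^b\bigr)$ and then matching powers, which is cosmetically equivalent to your $\EuScript{M}^{T}\xi_{(n)}=-\EuScript{N}\xi_{(n-1)}$, and it flags the same restriction to \emph{regular} (smooth-in-$\alpha$) zero modes that you identify.
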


\begin{proof}
Since $A^{aT}_\mu$ $\in$ $\Omega$ by assumption, the  operator

\begin{equation}
\EuScript{M}^{Tab}\equiv -\delta^{ab}\partial^2+gf^{abc}A^{Tc}_{\mu}\partial_{\mu}\,,
\label{proof1}
\end{equation}

\noindent is positive definite and, therefore, is invertible. As a consequence, eq.(\ref{lcg2}) can be rewritten as

\begin{eqnarray}
\xi^a(x,\alpha) &=& -g\left[(\EuScript{M}^T)^{-1}\right]^{ad}f^{dbc}\partial_{\mu}(A^{Lc}_{\mu}\xi^{b}) \nonumber \\
&=& -g\alpha\left[(\EuScript{M}^T)^{-1}\right]^{ad}f^{dbc}\partial_{\mu}\left(\left(\frac{\partial_{\mu}}{\partial^2}b^{c}\right)\xi^{b}\right)\,,
\label{proof2}
\end{eqnarray}

\noindent where the gauge condition (\ref{lcg1}) was used. We consider here zero modes  $\xi(x,\alpha)$ which are smooth functions of the gauge parameter $\alpha$. This requirement is motivated by the physical consideration that the quantity $\xi(x, \alpha)$ corresponds to the parameter of an infinitesimal gauge transformation. On physical grounds, we expect thus a regular behaviour of $\xi(x,\alpha)$ as function of $\alpha$, \textit{i.e.} infinitesimal modifications on the value of $\alpha$ should not produce a drastic singular behaviour of $\xi$, a feature also supported by the important fact that an acceptable zero mode has to be a square-integrable function, {\it i.e.} $\int d^4x \;\xi^a \xi^a <\infty$. Also, the $\alpha$-dependence should be such that in the limit $\alpha \rightarrow 0$ we recover the zero-modes of the Landau gauge.  Therefore, we require smoothness of $\xi(x,\alpha)$ with respect to $\alpha$.  Thus, for a certain radius of convergence $R$, we can write the zero-mode $\xi(x,\alpha)$ as a Taylor expansion in $\alpha$,

\begin{equation}
\xi^a(x,\alpha)=\sum^{\infty}_{n=0}\alpha^n\xi^a_{n}(x)\,.
\label{proof3}
\end{equation}

\noindent For such radius of convergence, we can plug eq.(\ref{proof3}) into eq.(\ref{proof2}), which gives

\begin{equation}
\sum^{\infty}_{n=0}\alpha^n\xi^a_{n}(x)=-\sum^{\infty}_{n=0}g\alpha^{n+1}\left[(\EuScript{M}^T)^{-1}\right]^{ad}f^{dbc}\partial_{\mu}\left(\left(\frac{\partial_{\mu}}{\partial^2}b^{c}\right)\xi^{b}_{n}\right)\equiv \sum^{\infty}_{n=0}\alpha^{n+1}\phi^a_{n}\,.
\label{proof4}
\end{equation}

\noindent Since $\alpha$ is arbitrary, eq.(\ref{proof4}) should hold order by order in $\alpha$, which implies

\begin{equation}
\xi^a_0=0\,\,\,\, \Rightarrow \,\,\,\, \phi^a_0 = -\alpha\left[(\EuScript{M}^T)^{-1}\right]^{ad}f^{dbc}\partial_{\mu}\left(\left(\frac{\partial_{\mu}}{\partial^2}b^{c}\right)\xi^{b}_{0}\right)=0\,,
\label{proof5}
\end{equation}

\noindent at zeroth order. Therefore, 

\begin{equation}
\xi^a_1 = \alpha\phi^a_0=0\,,
\label{proof5a}
\end{equation}

\noindent and by recursion,

\begin{equation}
\xi^a_n = \alpha\phi^a_{n-1}=0,\,\,\forall n\,.
\label{proof6}
\end{equation}

\noindent Hence, the zero-mode $\xi(x,\alpha)$ must be identically zero within $R$. Due to the requirement of smoothness, {\it i.e.}  of differentiability and  continuity of $\xi$, the zero-mode must vanish everywhere.
\end{proof}

\noindent A comment is in order here. As underlined before, Theorem~\ref{thma} holds for zero-modes which have a Taylor expansion in powers of $\alpha$. Although this does not seem to be a very strong requirement, since we expect smooth functions of $\alpha$, \textit{i.e.} small perturbations on $\alpha$ should not result on abrupt changes on $\xi$, one could think about the possibility to have zero-modes which might eventually display a pathological behaviour, {\it i.e.} which could be singular for some values of the gauge parameter $\alpha$. For this reason, we shall refer to  the  zero-modes that admit a Taylor expansion as \textit{regular} zero-modes. 

\noindent Motivated by the previous Theorem, we introduce now the following Gribov region ${\Omega}_{\mathrm{LCG}}$ in the linear covariant gauges:  

\begin{defi}
The Gribov region ${\Omega}_{\mathrm{LCG}}$ in linear covariant gauges is given by

\begin{equation}
{\Omega}_{\mathrm{LCG}} = \left\{A^{a}_{\mu},\,\, \partial_{\mu}A^{a}_{\mu}-\alpha b^a =0,\,\,\EuScript{M}^{Tab} > 0 \right\}\,,    \; 
\label{defi1}
\end{equation}

\label{defi}
\end{defi}
where the operator $\EuScript{M}^{Tab}$ is given in eq.\eqref{proof1}

\noindent From the previous Theorem, it follows that the  region ${\Omega}_{\mathrm{LCG}}$ is free from  infinitesimal Gribov copies which are regular, {\it i.e.} smooth functions of the gauge parameter $\alpha$.   

\noindent Although the understanding of the Gribov issue in the linear covariant gauges cannot certainly be compared to that achieved in the Landau gauge, the possibility of introducing the region ${\Omega}_{\mathrm{LCG}}$ which is free from infinitesimal copies can be regarded a first important step in order to face this highly non-trivial problem.  Let us now focus on the construction of the horizon function in this class of  gauges

\section{The horizon function and the Gribov-Zwanziger action in linear covariant gauges} \label{horizon}

Definition~\ref{defi} provides a consistent candidate for the Gribov region in linear covariant gauges. Following the approach employed  by Gribov in \cite{Gribov:1977wm} and generalized by Zwanziger \cite{Zwanziger:1989mf}, we should restrict the path integral to the region ${\Omega}_{\mathrm{LCG}}$, \textit{i.e.}

\begin{equation}
\EuScript{Z} = \int_{{\Omega}_{\mathrm{LCG}}}  \left[\EuScript{D}\mathbf{\Phi}\right] \mathrm{e}^{-(S_{\mathrm{YM}}+S_\mathrm{gf})}\,,
\label{hor1}
\end{equation}

\noindent where $\Phi$ represents all fields of the theory, see Appendix.~\ref{conventions} for the conventions used. From equation \eqref{defi1},  one immediately sees that the region ${\Omega}_{\mathrm{LCG}}$ is defined by the positivity of the operator  $\EuScript{M}^T$ which contains only the transverse component of the gauge field, eq.\eqref{proof1}. In other words, $\EuScript{M}^T$ is nothing but the Faddeev-Popov operator of the Landau gauge.  As a consequence, the whole procedure performed by Gribov \cite{Gribov:1977wm} and Zwanziger \cite{Zwanziger:1989mf} in the case of the Landau gauge, can be repeated here, although one has to keep in mind that the restriction of the domain of integration to the region ${\Omega}_{\mathrm{LCG}}$ affects only the transverse component of the gauge field, while the longitudinal sector remains unmodified. Therefore, following \cite{Gribov:1977wm,Zwanziger:1989mf}, for the restriction to the region ${\Omega}_{\mathrm{LCG}}$ we write 

\begin{equation}
\int_{{\Omega}_{\mathrm{LCG}}}  \left[\EuScript{D}\mathbf{\Phi}\right] \mathrm{e}^{-(S_{\mathrm{YM}}+S_\mathrm{gf})} = \int  \left[\EuScript{D}\mathbf{\Phi}\right] \mathrm{e}^{-\tilde{S}_{\mathrm{GZ}}} \;, \label{new3}   \; 
\end{equation}
where the action  $\tilde{S}_{\mathrm{GZ}}$ is given by 
\begin{equation}
\tilde{S}_{\mathrm{GZ}} = S_{\mathrm{YM}} + S_{\mathrm{gf}} + \gamma^4H(A^T) - 4V\gamma^4(N^2-1)\,.
\label{hor3}
\end{equation}
The quantity $H(A^T)$ is the non-local horizon function which depends only on the transverse component of the gauge field $A^T$, namely 

\begin{equation}
H(A^T) =  g^2 \int d^4x~f^{adc}A^{Tc}_{\mu}[\left(\EuScript{M}^T\right)^{-1}]^{ab}f^{bde}A^{Te}_{\mu}\,.
\label{hor4}
\end{equation}

\noindent The parameter $\gamma$ in eq.\eqref{hor3} is the Gribov parameter. As in the case of the Landau gauge, it is  determined in a  self-consistent  way by the gap equation

\begin{equation}
\langle H(A^T) \rangle = 4V(N^2-1)\,,
\label{gapequation}
\end{equation}
where the vacuum expectation value $\langle H(A^T) \rangle$ has to be evaluated  now with the measure defined in eq.\eqref{new3}.

\noindent The effective action (\ref{hor3}) implements the restriction to the region ${\Omega}_{\mathrm{LCG}}$. Here, an important feature has to be pointed out. Formally, the horizon function (\ref{hor4}) is the same as in Landau gauge. However, in this case, although the longitudinal component of the gauge field does not enter the horizon function, it appears explicitly in the action $\tilde{S}_{\mathrm{GZ}}$. As we shall see, this property will give rise to several differences with respect to the Landau gauge.    Another important point to be underlined concerns  the vacuum term $4V\gamma^4(N^2-1)$ in expression \eqref{hor3}. This term is related to the spectrum of the operator $\EuScript{M}^{T}$ \cite{Vandersickel:2012tz} which does not depend on the gauge parameter $\alpha$. Therefore, at least at the level of the construction of the effective action $\tilde{S}_{\mathrm{GZ}}$, eq.(\ref{hor3}), the vacuum term is independent of $\alpha$. Of course, we should check out  if quantum corrections might eventually introduce some dependence from $\alpha$ in the vacuum term. This would require a lengthy analysis of the renormalizability properties of $\tilde{S}_{\mathrm{GZ}}$, which is under current  investigation \cite{Capri}.

\noindent From now on, we will refer to the action (\ref{hor3}) as the Gribov-Zwanziger action in linear covariant gauges. As it happens in the Landau gauge, this action is non-local, due to the non-locality of the horizon function. However, as we shall see in the next section, it is possible to localize this action by the introduction of a suitable set of auxiliary fields. Here,  differences with respect to the case of the Landau gauge will show up, due to the unavoidable presence of the longitudinal component $A^{aL}_\mu$ of the gauge field.

\section{Localization of the Gribov-Zwanziger action in linear covariant gauges} \label{localization}

In order to have a suitable framework to apply the usual tools of quantum field theory, we have to express the action (\ref{hor3}) in local form. In the case of linear covariant gauges, the localization is not as direct as in Landau gauge. The difficulty relies on  the fact that the horizon function (\ref{hor4}) has two kinds of non-localities. The first one is the same as in Landau gauge, \textit{i.e.} the presence of the inverse of the Faddeev-Popov operator  $\EuScript{M}^T$.  The other one follows  from the fact that the decomposition of the  gauge field into transverse and longitudinal components is non-local, see eq.(\ref{A4}). Therefore, if we apply the same procedure used in the Landau gauge, the localization of the horizon function would give rise to a term of the type

\begin{equation}
\int d^4x~g\gamma^2f^{abc}A^{Tc}_{\mu}(\varphi+\bar{\varphi})^{ab}_{\mu}\,,
\label{hor5}
\end{equation}

\noindent which is still a non-local term, due to the presence of the transverse component  $A^{aT}_\mu$. However, it is possible to localize the action (\ref{hor3}) using an   additional step. First, let us write the transverse component  of the gauge field as

\begin{equation}
A^{Ta}_{\mu} = A^{a}_{\mu} - h^{a}_{\mu}\,,
\label{hor6}
\end{equation}

\noindent where the  field $h^a_\mu$ will be identified with the longitudinal component, {\it i.e.} we shall impose that 

\begin{equation}
h^{a}_{\mu} = \frac{\partial_{\mu}\partial_{\nu}}{\partial^2}A^{a}_{\nu}\,.
\label{hor7}
\end{equation}

\noindent In other words, we introduce an extra field $h^a_\mu$ and state that the transverse part of the gauge field can be written in a local way using eq.(\ref{hor6}). Clearly, we must impose a constraint to ensure that, on-shell, eq.(\ref{hor6}) is equivalent to the usual decomposition (\ref{A4}). Before introducing this constraint, we rewrite the horizon function in terms of $h^a_\mu$. As a matter of notation, we will denote $\EuScript{M}^{T}$ as $\EuScript{M}(A-h)$, when the transverse gauge field is expressed in terms of $h^a_\mu$. The horizon function \eqref{hor4} is now written as  

\begin{equation}
H(A,h) = g^2 \int d^4x~f^{adc}(A^{c}_{\mu}-h^{c}_{\mu})[\left(\EuScript{M}(A-h)\right)^{-1}]^{ab}f^{bde}(A^{e}_{\mu}-h^{e}_{\mu})\,.
\label{hor8}
\end{equation} 

\noindent The constraint given by eq.(\ref{hor7}) is imposed by  means of   a Lagrange multiplier $\lambda^{a}_{\mu}$, \textit{i.e.}, by the introduction of the term

\begin{equation}
S_{\lambda}=\int d^4x~\lambda^{a}_{\mu}(\partial^2h^{a}_{\mu}-\partial_{\mu}\partial_{\nu}A^{a}_{\nu})\,.
\label{hor9}
\end{equation}

\noindent Also, a second constraint is required in order to ensure that the field $(A^{a}_{\mu}-h^{a}_{\mu})$ is transverse, 

\begin{equation}
\partial_{\mu}(A^{a}_{\mu}-h^{a}_{\mu})=0\,,
\label{hor10}
\end{equation}

\noindent which is implemented by the following term

\begin{equation}
S_{\upsilon}=\int d^4x~\upsilon^a\partial_{\mu}(A^{a}_{\mu}-h^{a}_{\mu})\,, 
\label{hor11}
\end{equation}
with $\upsilon^a$ being a Lagrange multiplier. Therefore, the introduction of the extra field $h^a_\mu$ in eq.(\ref{hor8})  by means of  the constraints (\ref{hor7}) and (\ref{hor10}), implemented by the terms (\ref{hor9}) and (\ref{hor11}), provides an action $S'_{\mathrm{GZ}}$ 

\begin{equation}
S'_{\mathrm{GZ}}=S_{\mathrm{YM}}+S_{\mathrm{gf}}+ S_{\lambda}+S_{\upsilon} + \gamma^4H(A,h) - 4V\gamma^4(N^2-1)
\label{hor12}
\end{equation}

\noindent which is on-shell equivalent to the the non-local Gribov-Zwanziger action (\ref{hor3}). The introduction of the fields $h^{a}_{\mu}$, $\lambda^{a}_{\mu}$ and $\upsilon^a$ has to be done through a BRST\footnote{For the usual BRST invariance of the Faddeev-Popov action in linear covariant gauges we remind the reader to Appendix \ref{conventions}.}
 doublet \cite{Piguet:1995er} to avoid the appearance of such fields in the non-trivial part of the cohomology of the BRST operator $s$, a property which will be  important for the renormalizability analysis, \cite{Capri}. Therefore, we introduce the BRST doublets $(h^{a}_{\mu},\xi^{a}_{\mu})$, $(\bar{\lambda}^{a}_{\mu},\lambda^{a}_{\mu})$ and $(\tau^a,\upsilon^a)$, \textit{i.e.}

\begin{alignat}{2}
  sh^{a}_{\mu} &= \xi^{a}_{\mu}\,,\,\,\,\, &&s\lambda^a_{\mu} =0 \nonumber  \\
  s\xi^a_{\mu} &=0\,,\,\,\,\, &&s\tau^a = \upsilon^a \nonumber \\
	s\bar{\lambda}^a_{\mu} &= \lambda^a_{\mu}\,,\,\,\,\, &&s\upsilon^a=0 
\label{hor13}	
\end{alignat}

\noindent and define the following BRST exact  terms

\begin{eqnarray}
S_{\bar{\lambda}\lambda} &=& s\int d^4x~\bar{\lambda}^{a}_{\mu}(\partial^2h^{a}_{\mu}-\partial_{\mu}\partial_{\nu}A^{a}_{\nu})=\int d^4x~\lambda^a_{\mu}(\partial^2h^{a}_{\mu}-\partial_{\mu}\partial_{\nu}A^{a}_{\nu})\,,  \label{nt} \\
&-&\int d^4x~\bar{\lambda}^{a}_{\mu}(\partial^2\xi^a_{\mu}+\partial_{\mu}\partial_{\nu}D^{ab}_{\nu}c^b) \nonumber \\
S_{\tau\upsilon} &=& s\int d^4x~\tau^a\partial_{\mu}(A^{a}_{\mu}-h^a_{\mu}) = \int d^4x~\upsilon^a\partial_{\mu}(A^{a}_{\mu}-h^{a}_{\mu}) \nonumber \\
&+& \int d^4x~\tau^a\partial_{\mu}(D^{ab}_{\mu}c^b+\xi^a_{\mu})\,.
\label{hor14}
\end{eqnarray}

\noindent The terms \eqref{nt} and (\ref{hor14}) implement the constraints (\ref{hor7}) and (\ref{hor10}) in a manifest BRST invariant way. What remains now is the localization of the term (\ref{hor8}). Since this term has just the usual non-locality of the Gribov-Zwanziger action in the Landau gauge, given by the inverse of the operator $\EuScript{M}(A-h)$, we can localize it by the introduction of the same set of auxiliary fields employed in the case of the localization of the horizon function in the Landau gauge \cite{Vandersickel:2012tz}. Thus,  the term  (\ref{hor8}) is replaced by the the following local expression

\begin{eqnarray}
S_H &=& - s\int d^4x~\bar{\omega}^{ac}_{\mu}\EuScript{M}^{ab}(A-h)\varphi^{bc}_{\mu} + \gamma^2g\int d^4x~f^{abc}(A^{a}_{\mu}-h^{a}_{\mu})(\varphi+\bar{\varphi})^{bc}_{\mu} \nonumber \\
&=& \int d^4x~\left(\bar{\varphi}^{ac}_{\mu}\partial_{\nu}D^{ab}_{\nu}\varphi^{bc}_{\mu}-\bar{\omega}^{ac}_{\mu}\partial_{\nu}D^{ab}_{\nu}\omega^{bc}_{\mu}-gf^{adb}(\partial_{\nu}\bar{\omega}^{ac}_{\mu})(D^{de}_{\nu}c^{e})\varphi^{bc}_{\mu}\right) \nonumber \\
&+& \int d^4x~\left(gf^{adb}(\partial_{\nu}\bar{\varphi}^{ac}_{\mu})h^{d}_{\nu}\varphi^{bc}_{\mu}-gf^{adb}(\partial_{\nu}\bar{\omega}^{ac}_{\mu})h^{d}_{\nu}\omega^{bc}_{\mu}-gf^{adb}(\partial_{\nu}\bar{\omega}^{ac}_{\mu})\xi^{d}_{\nu}\varphi^{bc}_{\mu}\right).\nonumber \\
&+& \gamma^2g\int d^4x~f^{abc}(A^{a}_{\mu}-h^{a}_{\mu})(\varphi+\bar{\varphi})^{bc}_{\mu}\,, \nonumber \\
\label{hor15}
\end{eqnarray}

\noindent where

\begin{alignat}{2}
  s\varphi^{ab}_{\mu} &= \omega^{ab}_{\mu}\,,\,\,\,\, &&s\bar{\omega}^{ab}_{\mu} =\bar{\varphi}^{ab}_{\mu} \nonumber  \\
  s\omega^{ab}_{\mu} &=0\,,\,\,\,\, &&s\bar{\varphi}^{ab}_{\mu} = 0\,.  
\label{hor15A}	
\end{alignat}

\noindent Finally,  the action $S_{\mathrm{GZ}}$ given by

\begin{equation}
S_{\mathrm{GZ}}=S_{\mathrm{YM}}+S_{\mathrm{gf}}+S_{\bar{\lambda}\lambda}+S_{\tau\upsilon}+S_H
\label{hor16}
\end{equation}

\noindent is local and, on-shell, equivalent to the non-local action (\ref{hor4}). Explicitly, $S_{\mathrm{GZ}}$ is written as

\begin{eqnarray}
S_{\mathrm{GZ}} &=& \frac{1}{4}\int d^4x~F^{a}_{\mu\nu}F^{a}_{\mu\nu} + \int d^4x~b^a\left(\partial_{\mu}A^{a}_{\mu}-\frac{\alpha}{2}b^a\right)+\int d^4x~\bar{c}^a\partial_{\mu}D^{ab}_{\mu}c^b  \nonumber \\
&+& \int d^4x~\lambda^a_{\mu}(\partial^2h^{a}_{\mu}-\partial_{\mu}\partial_{\nu}A^{a}_{\nu}) +\int d^4x~\bar{\lambda}^{a}_{\mu}(\partial^2\xi^a_{\mu}+\partial_{\mu}\partial_{\nu}D^{ab}_{\nu}c^b)\nonumber \\
&+& \int d^4x~\upsilon^a\partial_{\mu}(A^{a}_{\mu}-h^{a}_{\mu}) + \int d^4x~\tau^a\partial_{\mu}(D^{ab}_{\mu}c^b+\xi^a_{\mu}) + \int d^4x~\left(\bar{\varphi}^{ac}_{\mu}\partial_{\nu}D^{ab}_{\nu}\varphi^{bc}_{\mu} \right.\nonumber \\ 
&-&\left.\bar{\omega}^{ac}_{\mu}\partial_{\nu}D^{ab}_{\nu}\omega^{bc}_{\mu}-gf^{adb}(\partial_{\nu}\bar{\omega}^{ac}_{\mu})(D^{de}_{\nu}c^{e})\varphi^{bc}_{\mu}\right)+\int d^4x~\left(gf^{adb}(\partial_{\nu}\bar{\varphi}^{ac}_{\mu})h^{d}_{\nu}\varphi^{bc}_{\mu}\right.\nonumber \\
&-&\left.gf^{adb}(\partial_{\nu}\bar{\omega}^{ac}_{\mu})h^{d}_{\nu}\omega^{bc}_{\mu}-gf^{adb}(\partial_{\nu}\bar{\omega}^{ac}_{\mu})\xi^{d}_{\nu}\varphi^{bc}_{\mu}\right) + \gamma^2g\int d^4x~f^{abc}(A^{a}_{\mu}-h^{a}_{\mu})(\varphi+\bar{\varphi})^{bc}_{\mu}  \nonumber \\
&-& 4V\gamma^4(N^2-1)  \;, 
\label{hor17}
\end{eqnarray}

\noindent and we will refer to it as the local Gribov-Zwanziger action in linear covariant gauges. We underline  that, in the limit $\gamma\rightarrow 0$, the term (\ref{hor15}) can be trivially integrated out to give a unity.  The remaining action is simply the gauge fixed Yang-Mills action with the addition of the constraints over $h^a_\mu$. These constraints are also easily integrated out, so that the resulting action is simply the usual Yang-Mills action in linear covariant gauges, see Appendix \ref{conventions}.

\noindent  Let us end this section by noticing that, in the local formulation, the gap equation \eqref{gapequation} takes the following expression 

\begin{equation} 
\frac{ \partial {\cal E}_\gamma}{\partial \gamma^2} = 0  \;, \label{locgap}   
\end{equation}

\noindent where ${\cal E}_\gamma$ denotes the vacuum energy, obtained from 

\begin{equation} 
\mathrm{e}^{-V{\cal E}_\gamma} =  
\int  \left[\EuScript{D}\mathbf{\Phi}\right] \mathrm{e}^{-S_{\mathrm{GZ}}}\,. \label{ve}     
\end{equation} 

\noindent with $\Phi$ being the complete set of fields, {\it i.e.}, the usual ones from the Faddeev-Popov quantization and the auxiliary fields introduced to implement the constraints and to localize the Gribov-Zwanziger action.

\section{BRST soft breaking of the Gribov-Zwanziger action in linear covariant gauges} \label{brstbreaking}

As it happens in the case of the Gribov-Zwanziger action in the Landau and maximal Abelian gauges \cite{Vandersickel:2012tz,Dudal:2008sp,Capri:2005tj,Dudal:2006ib,Capri:2006cz,Capri:2008vk,Capri:2010an,Pereira:2013aza,Pereira:2014apa,Dudal:2009xh,Sorella:2009vt,Baulieu:2008fy,Capri:2010hb,Dudal:2012sb,Dudal:2014rxa},
the expression (\ref{hor17}) is not invariant under the BRST transformations (\ref{A3}), (\ref{hor13}) and (\ref{hor15A}). The only term of the action which is not invariant under BRST transformations is

\begin{equation}
g\gamma^2\int d^4x~f^{abc}(A^{a}_{\mu}-h^{a}_{\mu})(\varphi+\bar{\varphi})^{bc}_{\mu}\,,
\label{brstbreaking1}
\end{equation}

\noindent giving

\begin{equation}
sS_{\mathrm{GZ}}\equiv \Delta_{\gamma^2} = g\gamma^2\int d^4x~f^{abc}\left[-(D^{ad}_{\mu}c^d+\xi^a_{\mu})(\varphi+\bar{\varphi})^{bc}_{\mu}+(A^{a}_{\mu}-h^{a}_{\mu})\omega^{bc}_{\mu}\right]\,.
\label{brstbreaking2}
\end{equation}

\noindent  From eq.(\ref{brstbreaking2}), we see that the BRST breaking is soft, {\it i.e.} it is of dimension two in the quantum fields.  This is precisely the same situation of  the Landau and maximal Abelian gauges. The restriction of the domain of integration in the path integral to the Gribov region generates a soft breaking of the BRST symmetry which turns out to be  proportional to the parameter $\gamma^2$. As discussed before, when we take the limit $\gamma\rightarrow 0$, we obtain the usual Faddeev-Popov gauge fixed Yang-Mills action which is BRST invariant. Thus, the breaking of the BRST symmetry is a direct consequence of the restriction of the path integral to the Gribov region. We also emphasize that, although the gauge condition we are dealing with contains  a gauge parameter $\alpha$, the BRST breaking term does not depend on such parameter, due to the fact that the horizon function  takes into account only the transverse component  of the gauge fields, as eq.(\ref{hor4}) shows. 

\section{Analysis of the gap equation at one-loop order} \label{oneloopgap}

As discussed above, in the construction of the effective action which takes into account the presence of infinitesimal Gribov copies a non-perturbative  parameter $\gamma$, {\it i.e.} the Gribov parameter, shows up in the theory. However, this parameter is not free, being determined by the gap equation \eqref{locgap}. In the Landau gauge, the Gribov parameter encodes the restriction of the domain of integration to the Gribov region $\Omega$.  Also, physical quantities like the glueball masses were computed in the Landau gauge \cite{Dudal:2010cd,Dudal:2013wja}, exhibiting an explicit dependence from $\gamma$. It is therefore of  primary importance to look at the Gribov parameter  in linear covariant gauges, where both  the longitudinal component of the gauge field and the gauge parameter $\alpha$ are present in the explicit loop computations.  We should check out  the possible (in)dependence of $\gamma$ from $\alpha$. Intuitively, from our construction, we would expect that the Gribov parameter would be independent from $\alpha$, as a consequence of the fact that we are imposing a restriction of the domain of integration in the path integral which affects essentially only the transverse sector of the theory. Moreover, the independence from $\alpha$ of the Gribov parameter would also imply that  physical quantities like the glueball masses would, as expected,  be $\alpha$-independent. To obtain some computational confirmation of the possible $\alpha$-independence of the Gribov parameter, we provide here the explicit computation of the gap equation at one-loop order. 

\noindent According to eqs.\eqref{locgap},\eqref{ve},  the one-loop vacuum energy can be  computed by retaining the quadratic part of $S_{\mathrm{GZ}}$ and integrating over the auxiliary fields, being given by 

\begin{equation}
\mathrm{e}^{-V{\cal E}^{(1)}_{\gamma}}=\int \left[\EuScript{D}A\right]\mathrm{e}^{-\int \frac{d^4p}{(2\pi)^4}~\frac{1}{2}A^{a}_{\mu}(p)\tilde{\Delta}^{ab}_{\mu\nu}A^{b}_{\nu}(-p)+4V\gamma^4(N^2-1)}\,,
\label{gapeq1}
\end{equation}

\noindent where 

\begin{equation}
\tilde{\Delta}^{ab}_{\mu\nu}=\delta^{ab}\left[\delta_{\mu\nu}\left(p^{2}+\frac{2Ng^2\gamma^4}{p^2}\right)+p_{\mu}p_{\nu}\left(\left(\frac{1-\alpha}{\alpha}\right)-\frac{2Ng^2\gamma^4}{p^4}\right)\right]\,.   \; 
\label{gapeq2}
\end{equation}

\noindent Performing the functional integral over the gauge fields, we obtain 

\begin{equation}
V{\cal E}^{(1)}_{\gamma} = \frac{1}{2}\mathrm{Tr~ln}\tilde{\Delta}^{ab}_{\mu\nu}-4V\gamma^4(N^2-1)\,.
\label{gapeq3}
\end{equation}

\noindent The remaining step is to compute the functional trace in eq.(\ref{gapeq3}). This is a standard computation, see \cite{Vandersickel:2012tz}. Using dimensional regularization, we have 

\begin{equation}
{\cal E}^{(1)}_{\gamma}=\frac{(N^2-1)(d-1)}{2}\int\frac{d^dp}{(2\pi)^d}\mathrm{ln}\left(p^2 + \frac{2Ng^2\gamma^4}{p^2}\right)-d\gamma^4(N^2-1)\,,
\label{gapeq4}
\end{equation}

\noindent where $d= 4-\epsilon$.  We see thus from eq.(\ref{gapeq4}) that the one-loop vacuum energy does not depend on $\alpha$ and the gap equation which determines  the Gribov parameter is written as

\begin{equation}
\frac{\partial {\cal E}^{(1)}_{\gamma}}{\partial \gamma^2} = 0\,\,\, \Rightarrow \,\,\, \frac{(d-1)Ng^2}{d}\int\frac{d^dp}{(2\pi)^d}\frac{1}{p^4+2Ng^2\gamma^4} = 1\,.
\label{gapeq5}
\end{equation}

\noindent This equation states that, at one-loop order, the Gribov parameter $\gamma$ is independent of $\alpha$ and, therefore, is the same as in the Landau gauge, which agrees with our expectation. Although being a useful check of our framework, it is important to state that this result has to be extended at higher orders, a non-trivial topic which is already under investigation  \cite{Capri}.

\section{Dynamical generation of dimension two condensates}\label{condensatessect}

The inclusion of dimension two condensates, see \cite{Dudal:2008sp,Dudal:2011gd}, is an important aspect of the  Gribov-Zwanziger set up in the Landau gauge\footnote{It is worth underlining that these condensates are also present in the maximal Abelian and Coulomb gauges, see  \cite{Capri:2005tj,Dudal:2006ib,Capri:2006cz,Capri:2008vk,Capri:2010an,Guimaraes:2015bra}.}, giving rise to the so-called RGZ action whose predictions are in very good agreement with the most recent lattice numerical data. We underline  that the inclusion of such condensates is not done by hand. They emerge as non-perrurbative dynamical effects due to a quantum instability of the Gribov-Zwanziger theory. In this sense, the RGZ action  \cite{Dudal:2008sp,Dudal:2011gd} takes into account the existence of these condensates in an effective way already at the level of the starting action. Here, we expect that, in analogy with the Landau gauge, dimension two condensates will show up in a similar way. In fact, the presence of these dimension two condensates can be established in a very simple way through a one-loop elementary computation, which shows that the following dimension two condensates 

\begin{equation}
\langle A^{Ta}_{\mu}A^{Ta}_{\mu}\rangle\;,  \qquad \langle \bar{\varphi}^{ab}_{\mu}\varphi^{ab}_{\mu}-\bar{\omega}^{ab}_{\mu}\omega^{ab}_{\mu}\rangle\,.
\label{cond0}
\end{equation} 

\noindent are non-vanishing already at one-loop order, being proportional to the Gribov parameter $\gamma$. In particular, it should be observed that the condensate $\langle A^{Ta}_{\mu}A^{Ta}_{\mu}\rangle$ contains only the transverse component of the gauge field. This is a direct consequence of the fact that the horizon function of the linear covariant gauges, eq.\eqref{hor4}, depends only on the transverse component $A_\mu^{aT}$. In order to evaluate the condensates  $\langle A^T A^T \rangle$ and $\langle \bar{\varphi}\varphi-\bar{\omega}\omega\rangle$  at one-loop order, one needs the quadratic part of the Gribov-Zwanziger action in linear covariant gauges, namely 

\begin{eqnarray}
S^{(2)}_{\mathrm{GZ}} &=& \int d^4x \left[\frac{1}{2}\left(-A^{a}_{\mu}\delta_{\mu\nu}\delta^{ab}\partial^2A^{b}_{\nu}+\left(1-\frac{1}{\alpha}\right)A^{a}_{\mu}\delta^{ab}\partial_{\mu}\partial_{\nu}A^{b}_{\nu}\right) + \bar{c}^{a}\delta^{ab}\partial^2c^{b} \right.\nonumber \\
&+&\left.\bar{\varphi}^{ac}_{\mu}\delta^{ab}\partial^2\varphi^{bc}_{\mu} - \bar{\omega}^{ac}_{\mu}\delta^{ab}\partial^2\omega^{bc}_{\mu}\right]+\gamma^2g\int d^4x~f^{abc}A^{a}_{\nu}\left(\delta_{\mu\nu}-\frac{\partial_{\mu}\partial_{\nu}}{\partial^2}\right)(\varphi + \bar{\varphi})^{bc}_{\mu}\nonumber \\
&-&4V\gamma^4(N^2-1)\,.
\label{cond1}
\end{eqnarray}

\noindent  Further, we  introduce the operators  $\int d^4x A^T A^T $ and $\int d^4x ( \bar{\varphi}\varphi-\bar{\omega}\omega) $ in the action by coupling them to two constant sources $J$ and $m$, and we define the vacuum functional ${\cal E}(m,J)$ defined by 
\begin{equation}
\mathrm{e}^{-V{\cal E}(m,J)}=\int \left[\EuScript{D}\Phi\right]\mathrm{e}^{-S^{(2)}_{\mathrm{GZ}}+J\int d^4x\left(\bar{\varphi}^{ac}_{\mu}\varphi^{ac}_{\mu}-\bar{\omega}^{ac}_{\mu}\omega^{ac}_{\mu}\right)-m\int d^4x~A^{a}_{\mu}\left(\delta_{\mu\nu}-\frac{\partial_{\mu}\partial_{\nu}}{\partial^{2}}\right)A^{a}_{\nu}}\,.
\label{cond2}
\end{equation}

\noindent It is apparent to check that the  condensates $\langle A^T A^T \rangle$ and $\langle \bar{\varphi}\varphi-\bar{\omega}\omega\rangle$ are obtained by differentiating ${\cal E}(m,J)$ with respect to the sources $(J,m)$, which are set to zero at the end, {\it i.e.}

\begin{eqnarray}
\langle \bar{\varphi}^{ac}_{\mu}\varphi^{ac}_{\mu}-\bar{\omega}^{ac}_{\mu}\omega^{ac}_{\mu}\rangle &=& - \frac{\partial {\cal E}(m,J)}{\partial J}\Big|_{J=m=0} \nonumber  \\
\langle A^{Ta}_{\mu}A^{Ta}_{\mu}\rangle &=& \frac{\partial {\cal E}(m,J)}{\partial m}\Big|_{J=m=0}\,.
\label{cond3}
\end{eqnarray}

\noindent A direct computation shows that

\begin{equation}
\mathrm{e}^{-V{\cal E}(m,J)}=\mathrm{e}^{-\frac{1}{2}\mathrm{Tr~ln}\Delta^{ab}_{\mu\nu}+4V\gamma^4(N^2-1)}\,,
\label{cond4}
\end{equation}

\noindent with

\begin{equation}
\Delta^{ab}_{\mu\nu}=\delta^{ab}\left[\delta_{\mu\nu}\left(k^2+\frac{2\gamma^4g^2N}{k^2+J}+2m\right)+k_{\mu}k_{\nu}\left(\left(\frac{1-\alpha}{\alpha}\right)-\frac{2\gamma^4g^2N}{k^2(k^2+J)}-\frac{2m}{k^2}\right)\right]\,. 
\label{cond5}
\end{equation}

\noindent Evaluating the trace, we obtain\footnote{In this computation we are concerned just with the contribution associated to the restriction of the path integral to ${\Omega}_{\mathrm{LCG}}$.}

\begin{equation}
{\cal E}(m,J)=\frac{(d-1)(N^2-1)}{2}\int \frac{d^dk}{(2\pi)^d}~\mathrm{ln}\left(k^2+\frac{2\gamma^4g^2N}{k^2+J}+2m\right)-d\gamma^4(N^2-1)\,. 
\label{cond6}
\end{equation}

\noindent Eq.(\ref{cond3}) and (\ref{cond6}) gives thus

\begin{equation}
\langle \bar{\varphi}^{ac}_{\mu}\varphi^{ac}_{\mu}-\bar{\omega}^{ac}_{\mu}\omega^{ac}_{\mu}\rangle = \gamma^4g^2N(N^2-1)(d-1)\int \frac{d^dk}{(2\pi)^d}\frac{1}{k^2}\frac{1}{(k^4+2g^2\gamma^4N)}
\label{cond7}
\end{equation}

\noindent and

\begin{equation}
\langle A^{Ta}_{\mu}A^{Ta}_{\mu}\rangle = -\gamma^4(N^2-1)(d-1)\int\frac{d^dk}{(2\pi)^d}\frac{1}{k^2}\frac{2g^2N}{(k^4+2g^2\gamma^4N)}\,,
\label{cond8}
\end{equation}

\noindent where we have employed dimensional regularization. Eq.(\ref{cond7}) and eq.(\ref{cond8}) show that, already at one-loop order, both condensates  $\langle A^T A^T \rangle$ and $\langle \bar{\varphi}\varphi-\bar{\omega}\omega\rangle$ are non-vanishing and proportional to the Gribov parameter $\gamma$. Notice also that both integrals in eqs.(\ref{cond7}),(\ref{cond8})  are perfectly convergent in the ultraviolet region by power counting. We see thus that, in perfect analogy with the case of the Landau gauge, dimension two condensates are automatically generated by the restriction of the domain of integration to the Gribov region, as encoded in the  parameter $\gamma$. As shown in \cite{Dudal:2008sp,Dudal:2011gd}, the presence of these condensates can be taken into account directly in the starting action giving rise to the refinement of the Gribov-Zwanziegr action. Also, higher order contributions can be systematically evaluated through the calculation of the effective potential for the corresponding dimension two operators by means of the Local Composite Operator technique, see \cite{Dudal:2011gd,Dudal:2003by}. 

\noindent In the present case, for the refined version of the Gribov-Zwanziger action which takes into account the presence of the dimension two condensates, we get

 \begin{equation}
S_{\mathrm{RGZ}} = S_{\mathrm{GZ}} + \frac{{\hat m}^2}{2}\int d^4x~(A^{a}_{\mu}-h^{a}_{\mu})(A^{a}_{\mu}-h^{a}_{\mu})-{\hat M}^2\int d^4x~(\bar{\varphi}^{ab}_{\mu}\varphi^{ab}_{\mu}-\bar{\omega}^{ab}_{\mu}\omega^{ab}_{\mu})\,,
\label{propa1}
\end{equation}

\noindent where the parameters $({\hat m},{\hat M})$ can be determined order by order in a self-content way through the evaluation of the corresponding effective potential, as outlined in the case of the Landau gauge  \cite{Dudal:2011gd}.  Let us remark here that the calculation of the vacuum functional ${\cal E}(m,J)$, eq. \eqref{cond2}, done in the previous section shows that, at one-loop order, these parameters turn out to be independent from $\alpha$. The study of the effective potential for the dimension two operators  $\int d^4x A^T A^T $ and $\int d^4x ( \bar{\varphi}\varphi-\bar{\omega}\omega) $ will be thus of utmost importance in order to extend this feature to higher orders.

\noindent We are now ready to evaluate the tree level gluon propagator in the linear covariant gauges. This will be the topic of the next section.  

\section{Gluon propagator and comparison with the most recent lattice data}\label{propagator}

From the refined Gribov-Zwanziger action, eq.(\ref{propa1}), one can immediately evaluate the  tree level gluon propagator in linear covariant gauges, given by the following expression 

\begin{equation}
\langle A^{a}_{\mu}(k)A^{b}_{\nu}(-k)\rangle = \delta^{ab}\left[\frac{k^2+{\hat M}^2}{(k^2+{\hat m}^2)(k^2+{\hat M}^2)+2g^2\gamma^4N}\left(\delta_{\mu\nu}-\frac{k_{\mu}k_{\nu}}{k^2}\right)+\frac{\alpha}{k^2}\frac{k_{\mu}k_{\nu}}{k^2}\right]\,.
\label{propa2}
\end{equation}

\noindent A few comments are now in order.  First, the longitudinal sector is not affected by the restriction to the Gribov region, \textit{i.e.} the longitudinal component of the propagator is the same as the perturbative one. This is an expected result,  since the Gribov region ${\Omega}_{\mathrm{LCG}}$ for linear covariant gauges does not impose any restriction to the longitudinal component $A^{aL}_\mu$. Second, in the limit $\alpha\rightarrow 0$, the gluon propagator coincides precisely with the known result in Landau gauge \cite{Dudal:2008sp}. In particular, this  implies that all features  of the RGZ framework derived in Landau gauge remains true for  the transverse component of the correlation function \eqref{propa2}.

\subsection{Lattice results}

As we have already mentioned before, unlike the Landau, Coulomb and maximal Abelian gauge, the linear covariant gauges do not exhibit a minimizing functional, a feature which is the source of several  complications in order to construct a lattice formulation of these gauges. The study of the linear covariant gauges through  lattice numerical simulations represents a big challenge. The first attempt to implement these gauges on the lattice was undertaken by   \cite{Giusti:1996kf,Giusti:1999im,Giusti:2000yc,Giusti:2001kr}. More recently, the authors  \cite{Cucchieri:2008zx,Mendes:2008ux,Cucchieri:2010ku,Cucchieri:2009kk,Cucchieri:2011pp,Cucchieri:2011aa} have been able to implement the linear covariant gauges on the lattice by means of a different procedure. 

\noindent With respect to the most recent data \cite{Cucchieri:2008zx,Mendes:2008ux,Cucchieri:2010ku,Cucchieri:2009kk,Cucchieri:2011pp,Cucchieri:2011aa}  obtained on bigger lattices, our results are in very good qualitative agreement: The tree level transverse gluon propagator does not depend on $\alpha$ and, therefore, behaves like the gluon propagator in the refined Gribov-Zwanziger framework in the Landau gauge. On the other hand, the longitudinal form factor $D_L(k^2)$  defined by 
\begin{equation} 
D_L(k^2) =k^2  \frac{\delta^{ab}}{N^2-1} \frac{k_\mu k_\nu}{k^2} \langle A^a_\mu(k) A^b_\nu(-k) \rangle= \alpha   \;. \label{lff}
\end{equation} 
 is equal to the gauge parameter $\alpha$, being not affected by the restriction to the Gribov region ${\Omega}_{\mathrm{LCG}}$. These results are in complete agreement with the numerical data of 
 \cite{Cucchieri:2008zx,Mendes:2008ux,Cucchieri:2010ku,Cucchieri:2009kk,Cucchieri:2011pp,Cucchieri:2011aa}. Although many properties of the Gribov region ${\Omega}_{\mathrm{LCG}}$ in linear covariant gauges need to be further established, the qualitative agreement of our results on the gluon propagator with the recent numerical are certainly reassuring, providing a good support  for the introduction of the region ${\Omega}_{\mathrm{LCG}}$.

\noindent To end this section we point out that, recently, results for the gluon and ghost propagators in linear covariant gauges have been  obtained through the use of the Dyson-Schwinger equations by \cite{Aguilar:2015nqa,Huber:2015ria}.

%\subsection{Dyson-Schwinger equations results}

%The most recent results on the gluon propagator in linear covariant gauges can be found in \cite{Aguilar:2015nqa,Huber:2015ria}. Numerical results are provided for the transverse component of the gauge field and some dependence on the gauge parameter $\alpha$ is verified. In particular, in \cite{Huber:2015ria} is mentioned that the gauge parameter dependence seems to be a consequence of the truncation used to perform the computations and further improvements might be important to change this dependence. Our results point towards a complete independence of the transverse propagator with respect to $\alpha$. Therefore, it will be extremely interesting to see if refinements on the truncations of Dyson-Schwinger equations will remove such dependence. We shall also mention that in \cite{Aguilar:2015nqa,Huber:2015ria}, the gauge parameter was restricted to $\alpha < 1$ values. Computations for values $\alpha > 1$ would be very important to compare with our proposal. 

\section{Conclusions}

In this work we have presented a proposal for a region ${\Omega}_{\mathrm{LCG}}$ in field space which is free from  infinitesimal Gribov copies in linear covariant gauges. A local implementation of the restriction of the domain of integration to the region ${\Omega}_{\mathrm{LCG}}$ has been worked out, resulting in the refined Gribov-Zwanziger action given in eqs.\eqref{hor17}, \eqref{propa1}. 

\noindent The tree-level gluon propagator stemming from the refined action \eqref{propa1} has been evaluated, being in good qualitative agreement with the most recent numerical lattice simulations   \cite{Cucchieri:2008zx,Mendes:2008ux,Cucchieri:2010ku,Cucchieri:2009kk,Cucchieri:2011pp,Cucchieri:2011aa}. Although many geometrical properties of the region ${\Omega}_{\mathrm{LCG}}$ remain to be established, we regard the agreement with the lattice data as a first encouraging step towards a non-perturbative analytic formulation of  the linear covariant gauges.  

\noindent Also, we have proven that at one-loop order, the Gribov parameter $\gamma$ is independent from the gauge parameter $\alpha$. This result is a kind of consistency check of the viability of our formulation,  since $\gamma$ enters explicitly the computation of physical quantities like the masses of the  glueballs. 

\noindent We  underline that, within the class of the covariant renormalizable gauges,  the non-perturbative treatment of Yang-Mills theories, both at analytical and numerical level, is usually carried out in the Landau gauge, for which a large amount of non-perturbative results has been already achieved. In this sense, the non-perturbative study of the linear covariant gauges is at the beginning. From this perspective, our results can be seen as  a first step in order to  address the Gribov issue in this class of gauges.  We hope that it might be useful for the well succeeded interplay between numerical and analytical analysis. 

\noindent Let us end by giving a partial list of  aspects of the linear covariant gauges which we intend to investigate in the near future.  First, we should study the renormalizability properties of the Gribov-Zwanziger action in linear covariant gauges \cite{Capri}. Second, a better understanding of the geometrical properties of the region ${\Omega}_{\mathrm{LCG}}$ needs to be achieved.  Third, the inclusion of quark matter fields  along the lines outlined in \cite{Capri:2014bsa} can provide information about the quark propagator and its non-perturbative mass function in these gauges. Fourth, a detailed study of the Faddeev-Popov ghost propagator remains to be worked out.  Fifth, the evaluation of the effective potential for the dimension two operators $\int d^4x A^T A^T $ and $\int d^4x ( \bar{\varphi}\varphi-\bar{\omega}\omega) $ will enable us to extend to higher orders the $\alpha$-independence of the massive parameters $(\gamma, {\hat m}, {\hat M})$ entering the refined Gribov-Zwanziger action in the linear covariant gauges. We hope to report soon on those interesting topics.

\section*{Acknowledgements}

The authors are grateful to M.~S.~Guimar\~aes and L.~F.~Palhares for many helpful discussions. ADP acknowledges Markus Huber for enlightening comments on recent results about linear covariant gauges. The Conselho Nacional de Desenvolvimento Cient\'{i}fico e Tecnol\'{o}gico\footnote{ RFS is a level PQ-2 researcher under the program \emph{Produtividade em Pesquisa}, 304924/2009-1.} (CNPq-Brazil), The Coordena\c c\~ao de Aperfei\c coamento de Pessoal de N\'ivel Superior (CAPES) and the Pr\'o-Reitoria de Pesquisa, P\'os-Gradua\c c\~ao e Inova\c c\~ao (PROPPI-UFF) are acknowledge for financial support.

%\paragraph{Note added.} This is also a good position for notes added
%after the paper has been written.

\appendix

\section{Conventions and notation}\label{conventions}

In this paper, we consider pure Yang-Mills theories in four Euclidean dimensions with $SU(N)$ gauge group. We choose the linear covariant gauges as our gauge condition, \textit{i.e.}

\begin{equation}
\partial_{\mu}A^{a}_{\mu}=\alpha b^a\,,
\label{A1}
\end{equation}

\noindent where $A^{a}_{\mu}$ denotes the gauge field, $\alpha$ is a gauge parameter and $b^a$ is the Lautrup-Nakanishi Lagrange multiplier.

\noindent Therefore, the gauge fixed Yang-Mills action is 

\begin{eqnarray}
S_{\mathrm{YM}} +S_{\mathrm{gf}}&=&\frac{1}{4}\int d^4x~F^{a}_{\mu\nu}F^{a}_{\mu\nu}+s\int d^4x~\bar{c}^{a}\left(\partial_{\mu}A^{a}_{\mu}-\frac{\alpha}{2}b^a\right)\nonumber \\
&=& \frac{1}{4}\int d^4x~F^{a}_{\mu\nu}F^{a}_{\mu\nu} + \int d^4x~\left[b^a\left(\partial_{\mu}A^{a}_{\mu}-\frac{\alpha}{2}b^a\right)+\bar{c}^{a}\partial_{\mu}D^{ab}_{\mu}c^b\right]\,,
\label{A2}
\end{eqnarray}

\noindent with $F^{a}_{\mu\nu}=\partial_{\mu}A^{a}_{\nu}-\partial_{\nu}A^{a}_{\mu}+gf^{abc}A^{b}_{\mu}A^{c}_{\nu}$ being the field strength, $\bar{c}^{a}$ and $c^a$ the Faddeev-Popov antighost and ghost respectively, $s$ the nilpotent BRST operator and $D^{ab}_{\mu}=\delta^{ab}\partial_{\mu}-gf^{abc}A^{c}_{\mu}$ the covariant derivative in the adjoint representation. The BRST transformations for the fields $(A,c,\bar{c},b)$ are 

\begin{eqnarray}
sA^{a}_{\mu} &=& -D^{ab}_{\mu}c^b \nonumber \\
sc^a &=& \frac{g}{2}f^{abc}c^bc^c \nonumber \\
s\bar{c}^{a} &=& b^a \nonumber \\
sb^a &=& 0\,.
\label{A3}
\end{eqnarray}

\noindent The gauge condition (\ref{A1}) imposes a constraint on  the longitudinal component of the gauge field, since the divergence of the transverse sector is identically vanishing. Therefore, it is useful to decompose the gauge field into transverse and longitudinal components,

\begin{eqnarray}
A^{Ta}_{\mu}&=&\left(\delta_{\mu\nu}-\frac{\partial_{\mu}\partial_{\nu}}{\partial^2}\right)A^{a}_{\nu} \nonumber \\
A^{La}_{\mu}&=&\frac{\partial_{\mu}\partial_{\nu}}{\partial^2}A^{a}_{\nu}\,.
\label{A4}
\end{eqnarray}

\noindent Clearly, the choice $\alpha=0$ in eq.(\ref{A1}) makes the gauge condition equivalent to the Landau gauge, which implies the vanishing of the longitudinal component of the gauge field. 

% The bibliography will probably be heavily edited during typesetting.
% We'll parse it and, using the arxiv number or the journal data, will
% query inspire, trying to verify the data (this will probalby spot
% eventual typos) and retrive the document DOI and eventual errata.
% We however suggest to always provide author, title and journal data:
% in short all the informations that clearly identify a document.


\begin{thebibliography}{99}

%\cite{Gribov:1977wm}
\bibitem{Gribov:1977wm} 
  V.~N.~Gribov,
  %``Quantization of Nonabelian Gauge Theories,''
  Nucl.\ Phys.\ B {\bf 139}, 1 (1978).
  %%CITATION = NUPHA,B139,1;%%
  %1352 citations counted in INSPIRE as of 19 Mar 2015

%\cite{Singer:1978dk}
\bibitem{Singer:1978dk} 
  I.~M.~Singer,
  %``Some Remarks on the Gribov Ambiguity,''
  Commun.\ Math.\ Phys.\  {\bf 60}, 7 (1978).
  %%CITATION = CMPHA,60,7;%%
  %519 citations counted in INSPIRE as of 19 mar 2015

%\cite{Sobreiro:2005ec}
\bibitem{Sobreiro:2005ec}
  R.~F.~Sobreiro and S.~P.~Sorella,
  %``Introduction to the Gribov ambiguities in Euclidean Yang-Mills theories,''
  hep-th/0504095.
  %%CITATION = HEP-TH/0504095;%%
  %64 citations counted in INSPIRE as of 19 Mar 2015	

%\cite{Vandersickel:2012tz}
\bibitem{Vandersickel:2012tz} 
  N.~Vandersickel and D.~Zwanziger,
  %``The Gribov problem and QCD dynamics,''
  Phys.\ Rept.\  {\bf 520}, 175 (2012)
  [arXiv:1202.1491 [hep-th]].
  %%CITATION = ARXIV:1202.1491;%%
  %73 citations counted in INSPIRE as of 24 Apr 2015		
	
%\cite{Dell'Antonio:1991xt}
\bibitem{Dell'Antonio:1991xt} 
  G.~Dell'Antonio and D.~Zwanziger,
  %``Every gauge orbit passes inside the Gribov horizon,''
  Commun.\ Math.\ Phys.\  {\bf 138}, 291 (1991).
  %%CITATION = CMPHA,138,291;%%
  %108 citations counted in INSPIRE as of 24 Apr 2015	

%\cite{Zwanziger:1989mf}
\bibitem{Zwanziger:1989mf}
  D.~Zwanziger,
  %``Local and Renormalizable Action From the Gribov Horizon,''
  Nucl.\ Phys.\ B {\bf 323}, 513 (1989).
  %%CITATION = NUPHA,B323,513;%%
  %263 citations counted in INSPIRE as of 24 Apr 2015	
  
  %\cite{Dudal:2007cw}
\bibitem{Dudal:2007cw} 
  D.~Dudal, S.~P.~Sorella, N.~Vandersickel and H.~Verschelde,
  %``New features of the gluon and ghost propagator in the infrared region from the Gribov-Zwanziger approach,''
  Phys.\ Rev.\ D {\bf 77}, 071501 (2008)
  [arXiv:0711.4496 [hep-th]].
  %%CITATION = ARXIV:0711.4496;%%
  %143 citations counted in INSPIRE as of 13 May 2015

%\cite{Dudal:2008sp}
\bibitem{Dudal:2008sp} 
  D.~Dudal, J.~A.~Gracey, S.~P.~Sorella, N.~Vandersickel and H.~Verschelde,
  %``A Refinement of the Gribov-Zwanziger approach in the Landau gauge: Infrared propagators in harmony with the lattice results,''
  Phys.\ Rev.\ D {\bf 78}, 065047 (2008)
  [arXiv:0806.4348 [hep-th]].
  %%CITATION = ARXIV:0806.4348;%%
  %278 citations counted in INSPIRE as of 24 Apr 2015

%\cite{Dudal:2011gd}
\bibitem{Dudal:2011gd} 
  D.~Dudal, S.~P.~Sorella and N.~Vandersickel,
  %``The dynamical origin of the refinement of the Gribov-Zwanziger theory,''
  Phys.\ Rev.\ D {\bf 84}, 065039 (2011)
  [arXiv:1105.3371 [hep-th]].
  %%CITATION = ARXIV:1105.3371;%%
  %52 citations counted in INSPIRE as of 24 Apr 2015	

%\cite{Cucchieri:2007rg}
\bibitem{Cucchieri:2007rg}
  A.~Cucchieri and T.~Mendes,
  %``Constraints on the IR behavior of the gluon propagator in Yang-Mills theories,''
  Phys.\ Rev.\ Lett.\  {\bf 100}, 241601 (2008)
  [arXiv:0712.3517 [hep-lat]].
  %%CITATION = ARXIV:0712.3517;%%
  %218 citations counted in INSPIRE as of 24 Apr 2015
		
%\cite{Cucchieri:2008fc}
\bibitem{Cucchieri:2008fc}
  A.~Cucchieri and T.~Mendes,
  %``Constraints on the IR behavior of the ghost propagator in Yang-Mills theories,''
  Phys.\ Rev.\ D {\bf 78}, 094503 (2008)
  [arXiv:0804.2371 [hep-lat]].
  %%CITATION = ARXIV:0804.2371;%%
  %143 citations counted in INSPIRE as of 24 Apr 2015
			
%\cite{Cucchieri:2011ig}
\bibitem{Cucchieri:2011ig} 
  A.~Cucchieri, D.~Dudal, T.~Mendes and N.~Vandersickel,
  %``Modeling the Gluon Propagator in Landau Gauge: Lattice Estimates of Pole Masses and Dimension-Two Condensates,''
  Phys.\ Rev.\ D {\bf 85}, 094513 (2012)
  [arXiv:1111.2327 [hep-lat]].
  %%CITATION = ARXIV:1111.2327;%%
  %67 citations counted in INSPIRE as of 24 Apr 2015			
	
%\cite{Dudal:2010cd}
\bibitem{Dudal:2010cd} 
  D.~Dudal, M.~S.~Guimaraes and S.~P.~Sorella,
  %``Glueball masses from an infrared moment problem and nonperturbative Landau gauge,''
  Phys.\ Rev.\ Lett.\  {\bf 106}, 062003 (2011)
  [arXiv:1010.3638 [hep-th]].
  %%CITATION = ARXIV:1010.3638;%%
  %39 citations counted in INSPIRE as of 24 Apr 2015
		
%\cite{Dudal:2013wja}
\bibitem{Dudal:2013wja} 
  D.~Dudal, M.~S.~Guimaraes and S.~P.~Sorella,
  %``Pade approximation and glueball mass estimates in $3d$ and $4d$ with $N_c=2,3$ colors,''
  Phys.\ Lett.\ B {\bf 732}, 247 (2014)
  [arXiv:1310.2016 [hep-ph]].
  %%CITATION = ARXIV:1310.2016;%%
  %4 citations counted in INSPIRE as of 24 Apr 2015		
	
%\cite{Capri:2005tj}
\bibitem{Capri:2005tj}
  M.~A.~L.~Capri, V.~E.~R.~Lemes, R.~F.~Sobreiro, S.~P.~Sorella and R.~Thibes,
  %``The Influence of the Gribov copies on the gluon and ghost propagators in Euclidean Yang-Mills theory in the maximal Abelian gauge,''
  Phys.\ Rev.\ D {\bf 72}, 085021 (2005)
  [hep-th/0507052].
  %%CITATION = HEP-TH/0507052;%%
  %33 citations counted in INSPIRE as of 24 Apr 2015

%\cite{Dudal:2006ib}
\bibitem{Dudal:2006ib} 
  D.~Dudal, M.~A.~L.~Capri, J.~A.~Gracey, V.~E.~R.~Lemes, R.~F.~Sobreiro, S.~P.~Sorella, R.~Thibes and H.~Verschelde,
  %``Gribov ambiguities in the maximal Abelian gauge,''
  Braz.\ J.\ Phys.\  {\bf 37}, 320 (2007)
  [hep-th/0609160].
  %%CITATION = HEP-TH/0609160;%%
  %11 citations counted in INSPIRE as of 24 Apr 2015
			
%\cite{Capri:2006cz}
\bibitem{Capri:2006cz} 
  M.~A.~L.~Capri, V.~E.~R.~Lemes, R.~F.~Sobreiro, S.~P.~Sorella and R.~Thibes,
  %``A Study of the maximal Abelian gauge in SU(2) Euclidean Yang-Mills theory in the presence of the Gribov horizon,''
  Phys.\ Rev.\ D {\bf 74}, 105007 (2006)
  [hep-th/0609212].
  %%CITATION = HEP-TH/0609212;%%
  %21 citations counted in INSPIRE as of 24 Apr 2015
				
%\cite{Capri:2008vk}
\bibitem{Capri:2008vk} 
  M.~A.~L.~Capri, A.~J.~Gomez, V.~E.~R.~Lemes, R.~F.~Sobreiro and S.~P.~Sorella,
  %``Study of the Gribov region in Euclidean Yang-Mills theories in the maximal Abelian gauge,''
  Phys.\ Rev.\ D {\bf 79}, 025019 (2009)
  [arXiv:0811.2760 [hep-th]].
  %%CITATION = ARXIV:0811.2760;%%
  %17 citations counted in INSPIRE as of 24 Apr 2015
					
%\cite{Capri:2010an}
\bibitem{Capri:2010an} 
  M.~A.~L.~Capri, A.~J.~Gomez, M.~S.~Guimaraes, V.~E.~R.~Lemes and S.~P.~Sorella,
  %``Study of the properties of the Gribov region in SU(N) Euclidean Yang-Mills theories in the maximal Abelian gauge,''
  J.\ Phys.\ A {\bf 43}, 245402 (2010)
  [arXiv:1002.1659 [hep-th]].
  %%CITATION = ARXIV:1002.1659;%%
  %20 citations counted in INSPIRE as of 24 Apr 2015					
	

  
    \bibitem{Zwanziger:2002sh} 
  D.~Zwanziger,
  %``No confinement without Coulomb confinement,''
  Phys.\ Rev.\ Lett.\  {\bf 90}, 102001 (2003)
  [hep-lat/0209105].
  %%CITATION = HEP-LAT/0209105;%%
  %142 citations counted in INSPIRE as of 31 Jan 2015
  
  %\cite{Zwanziger:2003de}
\bibitem{Zwanziger:2003de} 
  D.~Zwanziger,
  %``Analytic calculation of color Coulomb potential,''
  Phys.\ Rev.\ D {\bf 70}, 094034 (2004)
  [hep-ph/0312254].
  %%CITATION = HEP-PH/0312254;%%
  %47 citations counted in INSPIRE as of 31 Jan 2015
  
  %\cite{Greensite:2004ke}
\bibitem{Greensite:2004ke} 
  J.~Greensite, S.~Olejnik and D.~Zwanziger,
  %``Coulomb energy, remnant symmetry, and the phases of nonAbelian gauge theories,''
  Phys.\ Rev.\ D {\bf 69}, 074506 (2004)
  [hep-lat/0401003].
  %%CITATION = HEP-LAT/0401003;%%
  %104 citations counted in INSPIRE as of 31 Jan 2015
  
  %\cite{Zwanziger:2004np}
\bibitem{Zwanziger:2004np} 
  D.~Zwanziger,
  %``Equation of state of gluon plasma from fundamental modular region,''
  Phys.\ Rev.\ Lett.\  {\bf 94}, 182301 (2005)
  [hep-ph/0407103].
  %%CITATION = HEP-PH/0407103;%%
  %42 citations counted in INSPIRE as of 31 Jan 2015
  
  %\cite{Guimaraes:2015bra}
\bibitem{Guimaraes:2015bra} 
  M.~S.~Guimaraes, B.~W.~Mintz and S.~P.~Sorella,
  %``Dimension two condensates in the Gribov-Zwanziger theory in the Coulomb gauge,''
  arXiv:1503.03120 [hep-th].
  %%CITATION = ARXIV:1503.03120;%%


%\cite{Sobreiro:2005vn}
\bibitem{Sobreiro:2005vn} 
  R.~F.~Sobreiro and S.~P.~Sorella,
  %``A Study of the Gribov copies in linear covariant gauges in Euclidean Yang-Mills theories,''
  JHEP {\bf 0506}, 054 (2005)
  [hep-th/0506165].
  %%CITATION = HEP-TH/0506165;%%
  %28 citations counted in INSPIRE as of 20 Mar 2015
	
\bibitem{Capri}
M.~A.~L.~Capri, A.~D.~Pereira, R.~F.~Sobreiro and S.~P.~Sorella,
Work in Progress.	

%\cite{Piguet:1995er}
\bibitem{Piguet:1995er} 
  O.~Piguet and S.~P.~Sorella,
  %``Algebraic renormalization: Perturbative renormalization, symmetries and anomalies,''
  Lect.\ Notes Phys.\ M {\bf 28}, 1 (1995).
  %%CITATION = LNPHA,M28,1;%%
  %161 citations counted in INSPIRE as of 13 May 2015
  
  

  %\cite{Pereira:2013aza}
\bibitem{Pereira:2013aza} 
  A.~D.~Pereira and R.~F.~Sobreiro,
  %``On the elimination of infinitesimal Gribov ambiguities in non-Abelian gauge theories,''
  Eur.\ Phys.\ J.\ C {\bf 73}, 2584 (2013)
  [arXiv:1308.4159 [hep-th]].
  %%CITATION = ARXIV:1308.4159;%%
  %6 citations counted in INSPIRE as of 24 Apr 2015	

%\cite{Pereira:2014apa}
\bibitem{Pereira:2014apa} 
  A.~D.~Pereira, Jr. and R.~F.~Sobreiro,
  %``Gribov ambiguities at the Landau-maximal Abelian interpolating gauge,''
  Eur.\ Phys.\ J.\ C {\bf 74}, no. 8, 2984 (2014)
  [arXiv:1402.3477 [hep-th]].
  %%CITATION = ARXIV:1402.3477;%%	
  
  
  \cite{Dudal:2009xh,Sorella:2009vt,Baulieu:2008fy,Capri:2010hb,Dudal:2012sb,Dudal:2014rxa}
  
  %\cite{Dudal:2009xh}
\bibitem{Dudal:2009xh} 
  D.~Dudal, S.~P.~Sorella, N.~Vandersickel and H.~Verschelde,
  %``Gribov no-pole condition, Zwanziger horizon function, Kugo-Ojima confinement criterion, boundary conditions, BRST breaking and all that,''
  Phys.\ Rev.\ D {\bf 79}, 121701 (2009)
  [arXiv:0904.0641 [hep-th]].
  %%CITATION = ARXIV:0904.0641;%%
  %61 citations counted in INSPIRE as of 13 May 2015
  
  %\cite{Sorella:2009vt}
\bibitem{Sorella:2009vt} 
  S.~P.~Sorella,
  %``Gribov horizon and BRST symmetry: A Few remarks,''
  Phys.\ Rev.\ D {\bf 80}, 025013 (2009)
  [arXiv:0905.1010 [hep-th]].
  %%CITATION = ARXIV:0905.1010;%%
  %45 citations counted in INSPIRE as of 13 May 2015
  
  %\cite{Baulieu:2008fy}
\bibitem{Baulieu:2008fy} 
  L.~Baulieu and S.~P.~Sorella,
  %``Soft breaking of BRST invariance for introducing non-perturbative infrared effects in a local and renormalizable way,''
  Phys.\ Lett.\ B {\bf 671}, 481 (2009)
  [arXiv:0808.1356 [hep-th]].
  %%CITATION = ARXIV:0808.1356;%%
  %44 citations counted in INSPIRE as of 13 May 2015
  
  %\cite{Capri:2010hb}
\bibitem{Capri:2010hb} 
  M.~A.~L.~Capri, A.~J.~Gomez, M.~S.~Guimaraes, V.~E.~R.~Lemes, S.~P.~Sorella and D.~G.~Tedesco,
  %``A remark on the BRST symmetry in the Gribov-Zwanziger theory,''
  Phys.\ Rev.\ D {\bf 82}, 105019 (2010)
  [arXiv:1009.4135 [hep-th]].
  %%CITATION = ARXIV:1009.4135;%%
  %26 citations counted in INSPIRE as of 13 May 2015
  
  %\cite{Dudal:2012sb}
\bibitem{Dudal:2012sb} 
  D.~Dudal and S.~P.~Sorella,
  %``The Gribov horizon and spontaneous BRST symmetry breaking,''
  Phys.\ Rev.\ D {\bf 86}, 045005 (2012)
  [arXiv:1205.3934 [hep-th]].
  %%CITATION = ARXIV:1205.3934;%%
  %16 citations counted in INSPIRE as of 13 May 2015
  
  %\cite{Dudal:2014rxa}
\bibitem{Dudal:2014rxa} 
  D.~Dudal, M.~S.~Guimaraes, I.~F.~Justo and S.~P.~Sorella,
  %``On bounds and boundary conditions in the continuum Landau gauge,''
  Eur.\ Phys.\ J.\ C {\bf 75}, no. 2, 83 (2015)
  [arXiv:1411.2500 [hep-th]].
  %%CITATION = ARXIV:1411.2500;%%
  %3 citations counted in INSPIRE as of 13 May 2015
  
	
%\cite{Dudal:2003by}
\bibitem{Dudal:2003by} 
  D.~Dudal, H.~Verschelde, J.~A.~Gracey, V.~E.~R.~Lemes, M.~S.~Sarandy, R.~F.~Sobreiro and S.~P.~Sorella,
  %``Dynamical gluon mass generation from  in linear covariant gauges,''
  JHEP {\bf 0401}, 044 (2004)
  [hep-th/0311194].
  %%CITATION = HEP-TH/0311194;%%
  %64 citations counted in INSPIRE as of 24 Apr 2015	
	
%%\cite{Verschelde:2001ia}
%\bibitem{Verschelde:2001ia} 
% % H.~Verschelde, K.~Knecht, K.~Van Acoleyen and M.~Vanderkelen,
 % %``The Nonperturbative groundstate of QCD and the local composite operator A(mu)**2,''
 % Phys.\ Lett.\ B {\bf 516}, 307 (2001)
  %[hep-th/0105018].
  %%CITATION = HEP-TH/0105018;%%
  %126 citations counted in INSPIRE as of 24 Apr 2015	
	
%\cite{Giusti:1996kf}
\bibitem{Giusti:1996kf} 
  L.~Giusti,
  %``Lattice gauge fixing for generic covariant gauges,''
  Nucl.\ Phys.\ B {\bf 498}, 331 (1997)
  [hep-lat/9605032].
  %%CITATION = HEP-LAT/9605032;%%
  %31 citations counted in INSPIRE as of 24 Apr 2015

%\cite{Giusti:1999im}
\bibitem{Giusti:1999im} 
  L.~Giusti, M.~L.~Paciello, S.~Petrarca and B.~Taglienti,
  %``Lattice gauge fixing for parameter dependent covariant gauges,''
  Phys.\ Rev.\ D {\bf 63}, 014501 (2001)
  [hep-lat/9911038].
  %%CITATION = HEP-LAT/9911038;%%
  %18 citations counted in INSPIRE as of 24 Apr 2015
			
%\cite{Giusti:2000yc}
\bibitem{Giusti:2000yc} 
  L.~Giusti, M.~L.~Paciello, S.~Petrarca, C.~Rebbi and B.~Taglienti,
  %``Results on the gluon propagator in lattice covariant gauges,''
  Nucl.\ Phys.\ Proc.\ Suppl.\  {\bf 94}, 805 (2001)
  [hep-lat/0010080].
  %%CITATION = HEP-LAT/0010080;%%
  %13 citations counted in INSPIRE as of 24 Apr 2015
				
%\cite{Giusti:2001kr}
\bibitem{Giusti:2001kr} 
  L.~Giusti, M.~L.~Paciello, S.~Petrarca, B.~Taglienti and N.~Tantalo,
  %``Quark and gluon propagators in covariant gauges,''
  Nucl.\ Phys.\ Proc.\ Suppl.\  {\bf 106}, 995 (2002)
  [hep-lat/0110040].
  %%CITATION = HEP-LAT/0110040;%%
  %8 citations counted in INSPIRE as of 24 Apr 2015				

%\cite{Cucchieri:2008zx}
\bibitem{Cucchieri:2008zx} 
  A.~Cucchieri, A.~Maas and T.~Mendes,
  %``Linear Covariant Gauges on the Lattice,''
  Comput.\ Phys.\ Commun.\  {\bf 180}, 215 (2009)
  [arXiv:0806.3124 [hep-lat]].
  %%CITATION = ARXIV:0806.3124;%%
  %16 citations counted in INSPIRE as of 24 Apr 2015

%\cite{Mendes:2008ux}
\bibitem{Mendes:2008ux} 
  T.~Mendes, A.~Cucchieri, A.~Maas and A.~Mihara,
  %``Infrared Propagators in MAG and Feynman gauge on the lattice,''
  arXiv:0809.3741 [hep-lat].
  %%CITATION = ARXIV:0809.3741;%%
  %13 citations counted in INSPIRE as of 24 Apr 2015
	
%\cite{Cucchieri:2010ku}
\bibitem{Cucchieri:2010ku} 
  A.~Cucchieri, T.~Mendes and E.~M.~d.~S.~Santos,
  %``Simulating linear covariant gauges on the lattice: A New approach,''
  PoS QCD {\bf -TNT09}, 009 (2009)
  [arXiv:1001.2002 [hep-lat]].
  %%CITATION = ARXIV:1001.2002;%%
  %4 citations counted in INSPIRE as of 24 Apr 2015
		
%\cite{Cucchieri:2009kk}
\bibitem{Cucchieri:2009kk} 
  A.~Cucchieri, T.~Mendes and E.~M.~S.~Santos,
  %``Covariant gauge on the lattice: A New implementation,''
  Phys.\ Rev.\ Lett.\  {\bf 103}, 141602 (2009)
  [arXiv:0907.4138 [hep-lat]].
  %%CITATION = ARXIV:0907.4138;%%
  %21 citations counted in INSPIRE as of 24 Apr 2015

%\cite{Cucchieri:2011pp}
\bibitem{Cucchieri:2011pp}
  A.~Cucchieri, T.~Mendes, G.~M.~Nakamura and E.~M.~S.~Santos,
  %``Gluon Propagators in Linear Covariant Gauge,''
  PoS FACESQCD {\bf }, 026 (2010)
  [arXiv:1102.5233 [hep-lat]].
  %%CITATION = ARXIV:1102.5233;%%
  %8 citations counted in INSPIRE as of 24 Apr 2015

%\cite{Cucchieri:2011aa}
\bibitem{Cucchieri:2011aa} 
  A.~Cucchieri, T.~Mendes, G.~M.~Nakamura and E.~M.~S.~Santos,
  %``Feynman gauge on the lattice: New results and perspectives,''
  AIP Conf.\ Proc.\  {\bf 1354}, 45 (2011)
  [arXiv:1101.5080 [hep-lat]].
  %%CITATION = ARXIV:1101.5080;%%
  %11 citations counted in INSPIRE as of 24 Apr 2015

%\cite{Aguilar:2015nqa}
\bibitem{Aguilar:2015nqa} 
  A.~C.~Aguilar, D.~Binosi and J.~Papavassiliou,
  %``Yang-Mills two-point functions in linear covariant gauges,''
  Phys.\ Rev.\ D {\bf 91}, no. 8, 085014 (2015)
  [arXiv:1501.07150 [hep-ph]].
  %%CITATION = ARXIV:1501.07150;%%
  %2 citations counted in INSPIRE as of 24 Apr 2015
						
%\cite{Huber:2015ria}
\bibitem{Huber:2015ria} 
  M.~Q.~Huber,
  %``Gluon and ghost propagators in linear covariant gauges,''
  Phys.\ Rev.\ D {\bf 91}, no. 8, 085018 (2015)
  [arXiv:1502.04057 [hep-ph]].
  %%CITATION = ARXIV:1502.04057;%%						
					
%\cite{Capri:2014bsa}
\bibitem{Capri:2014bsa} 
  M.~A.~L.~Capri, M.~S.~Guimaraes, I.~F.~Justo, L.~F.~Palhares and S.~P.~Sorella,
  %``Properties of the Faddeev-Popov operator in the Landau gauge, matter confinement and soft BRST breaking,''
  Phys.\ Rev.\ D {\bf 90}, no. 8, 085010 (2014)
  [arXiv:1408.3597 [hep-th]].
  %%CITATION = ARXIV:1408.3597;%%
  %5 citations counted in INSPIRE as of 24 Apr 2015
					
\end{thebibliography}
\end{document}